\newtheorem{theorem}{Theorem}
\newtheorem{lemma}{Lemma}
\newtheorem{example}{Example}
\newtheorem{definition}{Definition}
\newtheorem{remark}{Remark}
\DeclareMathOperator{\diag}{diag}
\DeclareMathOperator{\argmin}{argmin}
\DeclareMathOperator{\hd}{HD}
	\definecolor{gray(x11gray)}{rgb}{0.9, 0.9, 0.9}
\begin{document}

\title{Fast Successive-Cancellation Decoding of $2\times2$ Kernel Non-Binary Polar Codes: Identification, Decoding and Simplification
}

%\author{IEEE Publication Technology,~\IEEEmembership{Staff,~IEEE,}
        % <-this % stops a space
%\thanks{This paper was produced by the IEEE Publication Technology Group. They are in Piscataway, NJ.}% <-this % stops a space
%\thanks{Manuscript received April 19, 2021; revised August 16, 2021.}
%}
\author{Ali Farsiabi, Hamid Ebrahimzad, Masoud Ardakani,~\IEEEmembership{Senior Member,~IEEE}, Chuandong Li}% Masoud Ardakani,~\IEEEmembership{Senior Member,~IEEE,} Chuandong Li}
% The paper headers
%\markboth{Journal of \LaTeX\ Class Files,~Vol.~14, No.~8, August~2021}%
{%Shell \MakeLowercase{\textit{et al.}}: A Sample Article Using IEEEtran.cls for IEEE Journals}

%\IEEEpubid{0000--0000/00\$00.00~\copyright~2021 IEEE}
% Remember, if you use this you must call \IEEEpubidadjcol in the second
% column for its text to clear the IEEEpubid mark.

\maketitle

\begin{abstract}
Non-binary polar codes (NBPCs) decoded by successive cancellation (SC) algorithm have remarkable bit-error-rate performance compared to the binary polar codes (BPCs). Due to the serial nature, SC decoding suffers from large latency. The latency issue in BPCs has been the topic of extensive research and it has been notably resolved by the introduction of fast SC-based decoders. However, the vast majority of research on NBPCs is devoted to issues concerning design and efficient implementation. In this paper, we propose fast SC decoding for NBPCs constructed based on $2\times 2$ kernels. In particular, we identify various non-binary special nodes in the SC decoding tree of NBPCs and propose their fast decoding. This way, we avoid traversing the full decoding tree and significantly reduce the decoding delay compared to symbol-by-symbol SC decoding. We also propose a simplified NBPC structure that facilitates the procedure of non-binary fast SC decoding. Using our proposed fast non-binary decoder, we observed an improvement of up to $95\%$ in latency concerning the original SC decoding. This is while our proposed fast SC decoder for NBPCs incurs no error-rate loss. 
\end{abstract}

\begin{IEEEkeywords}
Non-binary polar code, decoding latency, fast non-binary polar decoder, fast LNBSC decoder, fast non-binary SC decoder.
\end{IEEEkeywords}

\section{Introduction}
\IEEEPARstart{P}{olar} codes, invented by Arıkan, represent a channel forward-error-correction (FEC) scheme that can asymptotically achieve the capacity of binary-input memoryless channels \cite{arikan}. The explicit construction of polar codes and their low-complexity successive cancellation (SC) decoding algorithm has captured significant interest in both academia and industry. In particular, binary polar codes (BPCs) are adopted in the 5G standard as the FEC scheme in the control channel of the enhanced mobile broadband (eMBB) use case \cite{5G}. Although SC decoding of long-length polar codes provides a low-complexity capacity-achieving solution, its sequential bit-by-bit decoding schedule leads to high decoding latency, which limits its adoption in low-latency applications such as optical fiber communications or ultra-reliable low-latency communication (URLLC) scheme of 5G. Therefore, the design of fast-SC-based decoding algorithms is of significant practical interest.

The decoding latency problem of BPCs has been the topic of extensive research in recent years and various schemes have been developed to address this issue \cite{sarkis2014, yazdi, sarkis2018, hanif, condo, ercan, hashemilist2016, hashemilist2017,  hashemiflex, ardakani, LiLetter, SR, SRlist}. The main underlying idea behind these schemes is to perform some operations in parallel to avoid the serial nature of SC decoding. In particular, in these schemes, using the binary tree representation of SC decoding of polar codes, certain special nodes are recognized whose decoding can be performed in parallel. For example, identification and parallel decoding of Rate-0 and Rate-1 nodes was proposed in \cite{yazdi}. Similarly, in \cite{sarkis2014}, fast decoding of single parity check (SPC) and repetition (REP) nodes was introduced. On the basis of the works in \cite{yazdi, sarkis2014}, five new nodes (Type-I, Type-II, Type-III, TypeIV, Type-V) were identified in \cite{hanif}, and their fast decoding was formulated. To reduce the SC-decoding latency even further, generalized REP (G-REP) and generalized parity check (G-PC) nodes and their decoding were proposed in \cite{condo}. Recently, sequence repetition (SR) node, which is a generalization of the G-REP node, along with its decoding algorithm was introduced in \cite{SR}. Moreover, the fast decoding of the aforementioned special nodes were also extended to SC list (SCL) \cite{talvardy} decoding \cite{hashemilist2016, hashemilist2017,  SRlist,ardakani}.

As a promising FEC scheme, non-binary polar codes (NBPCs) can asymptotically achieve the capacity of discrete memoryless channels with arbitrary $q$-ary alphabets \cite{sasoglu, mori_confRS, mori_RSpolar, park_qary}. Importantly, due to symbol-level operations, the SC decoding latency in NBPCs is saved compared to bit-level operations of BPCs. The vast majority of research on non-binary polar codes is devoted to the issues concerning design and efficient implementations \cite{trifinovLett, savin, emspolar}. In \cite{sasoglu, mori_confRS, mori_RSpolar, 2x2NBkernel, gf3gf5}, NBPCs with different non-binary kernels were investigated and it has been shown that even with a $2 \times 2$ kernel, they achieve impressive bit error rates (BERs) compared to their binary counterparts. To the best of our knowledge, the only work regarding the fast decoding of NBPCs is \cite{rate_1NBP} in which Rate-1 nodes were used to simplify the SCL decoding. 

Motivated by the widespread demand for low-latency capacity-achieving FEC schemes, in this paper, we propose the fast SC decoding of non-binary polar codes constructed by $2\times 2$ kernels. In fact, to the best of our knowledge, this is the first work in which the identification of a variety of non-binary special nodes along with their decoding algorithms is proposed. We start by formulating the NBPC decoding algorithm in a form appropriate for identifying special nodes. We also discuss a hardware-friendly version of the algorithm that can be implemented using a limited number of quantization bits. We then identify various non-binary special nodes and propose low-complexity decoding algorithms for each of them. We also propose a simplified NBPC structure in which the elimination of certain permutations and multiplications in the proposed structure reduces the complexity of fast decoding. Our simulation shows that while the proposed fast SC decoder for NBPCs incurs almost no error-correction performance loss, the latency can be reduced by as much as $95\%$ percent.

The rest of the paper is organized as follows: Section II provides some background information on non-binary polar codes. This is followed by the presentation of the decoding algorithm of NBPCs in Section III. We then present our proposed fast decoding of NBPCs based on non-binary special nodes in Section IV. In Section V, we propose a simplified structure for NBPCs that is suitable for fast decoding.
In Section VI, the latency of our proposed fast decoder is analyzed. To
evaluate the accuracy of the proposed fast decoder, we present the simulation results in Section VII. We end the paper with some concluding remarks in Section VIII.

\section{Background}
\subsection{Notations}
In this paper, matrices, and vectors are denoted by boldfaced upper case and lower case letters, respectively. Moreover, all the vectors are assumed to be column vectors.
\subsection{Non-binary polar codes}\label{sec_nbpc}
In this paper, a non-binary polar code  (NBPC) with symbol-length $N=2^n$, message length $K$, and rate $R=K/N$ is defined over the non-binary Galois Field $\mathbb{GF}(q)$ where\footnote{The choice of $q$ as a power of $2$ may be of practical interest for hardware implementation. Nevertheless, our proposed method in this paper is applicable to any non-binary field.} $q=2^p$ with $p>1$. The non-binary elements over $\mathbb{GF}(q)$ can be expressed as $\alpha^{-\infty}=0, \alpha^0, \alpha^1,\dots,\alpha^{q-2}$ where $\alpha$ is the root of a primitive polynomial $f(x)=a_0+a_1x+a_2x^2+\dots+a_px^p$, $a_i\in \mathbb{GF}(2)$. Each of the elements in $\mathbb{GF}(q)$ can be represented by a binary vector of size $p$. Therefore, a sequence of information bits, $\boldsymbol{b}=[b_{0}, \dots, b_{K_b-1}]^T$, with length $K_b=pK$ can be converted to a non-binary sequence, $\mathbf{m} = [m_{0}, \dots, m_{K-1}]^T$, with length $K$ over $\mathbb{GF}(q)$. In the NBPC encoder, the elements of $\mathbf{m}$ are placed at $K$ information positions of \emph{input sequence} $\mathbf{u}=[u_0, \dots, u_{N-1}]^T$ and the remaining positions are filled with 0-symbols known as \emph{frozen} symbols.  The vector $\mathbf{u}$ is encoded into the non-binary polar codeword $\mathbf{c}=[c_0, \dots, c_{N-1}]^T$ using the following equation
\begin{equation}
\mathbf{c}^T=\mathbf{u}^T\mathbf{G}^{\bigotimes n}_2,\label{eq:polarencfix}
\end{equation}
where $\bigotimes$ denotes the Kronecker power and $\mathbf{G}^{\bigotimes n}_2$ is the generator matrix\footnote{$\mathbf{G}^{\bigotimes n}_2$ is the generator matrix of NBPCs in which the kernel coefficients are fixed in all the stages of polarization. In Section \ref{sec:Gnbpc}, we will describe the generalized NBPC structure in which the kernel coefficients can vary across the polarization units.} of NBPC. The kernel $\mathbf{G}_2$ that is used in this paper is the extension of $2\times 2$ Arıkan kernel to non-binary GF, which is written as  
\begin{equation}\label{eq:G2nb}
\mathbf{G}_2 = \begin{bmatrix} \mu & 0 \\ 
\gamma &  \delta \end{bmatrix}
\end{equation}
where $\mu, \gamma, \delta \in \mathbb{GF}(q)  \backslash 0$, referred to as kernel coefficients, are non-zero elements in $\mathbb{GF}(q)$. The set of information and frozen positions, denoted by $\mathcal{I}$ and $\mathcal{I}^c$, are typically optimized for a specific signal-to-noise ratio (SNR) and are determined using Monte-Carlo (MC) simulation \cite{nbplowlatency,2x2NBkernel}.

At the transmitter, depending on the application requirements, the non-binary codeword $\mathbf{c}$ can either directly be mapped to $p$-ary constellation symbols (such as $2^p$-QAM) or, alternatively, it can be converted to a binary sequence $\mathbf{t}=[t_0, \dots, t_{N_b-1}]^T$ with $N_b=p\times N$ elements and, likewise the binary FECs, be mapped to different symbols of an arbitrary modulation scheme (BPSK, M-QAM, ...). %In this paper, without loss of binary phase shift keying (BPSK) modulation that maps $\{0,1\}$ to $\{-1,1\}$ is considered. At the transmitter, the non-binary codeword $\mathbf{c}$ is converted to a binary sequence $\mathbf{t}=[t_0, \dots, t_{N_b-1}]^T$ with $N_b=p\times N$ elements and modulated using BPSK. Transmission takes place over the additive white Gaussian noise (AWGN) channel and the received sequence $\mathbf{y}$ at the channel output is obtained as $\mathbf{y}=(1-2\mathbf{t}) + \mathbf{z}$. The noise vector $\mathbf{z}$ consists of $N_b$ independent and identically distributed (i.i.d.)
%Gaussian random variables with zero mean and variance $\sigma^2$. Based on the channel output, the bit log-likelihood ratios (LLRs) can be obtained as
%\begin{equation}
%\eta^{\text{ch}}_k = \frac{2y_k}{\sigma^2}, \: \: \: \: \: 0\leq k \leq N_b-1.\label{eq:bitLLRs}
%\end{equation}
%As shown in the next section, the bit LLRs are used to obtain symbol LLRs corresponding to the non-binary codeword-symbols.  
\section{SC Decoding of NBPC}\label{sec_emssc}
\begin{figure}[!t]
\centering
\includegraphics[width=1.5in]{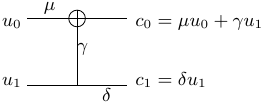}
\caption{Non-binary $2 \times 2$ kernel with $\mu, \gamma, \delta \in \mathbb{GF}(q)  \backslash 0$.}
\label{G2kernel}
\end{figure}
Figure \ref{G2kernel} shows the schematic of kernel $\mathbf{G}_2$ which is equivalent to the transformation $\mathcal{T}: (c_0, c_1) = (\mu u_0 + \gamma u_1, \delta u_1)$ in which the addition and multiplication are performed over $\mathbb{GF}(q)$. In fact, an NBPC with length $N=2^n$ is obtained by $n$ stages of recursive application of $\mathcal{T}$. As $\mathcal{T}$ is composed of a check node (CN) and a variable node (VN), there are $nN/2$ CNs/VNs in an NBPC of length $N$. In this section, the update rule of the CNs/VNs in the decoding process is described.
% \ref{sec_emssc}.
%\begin{figure}[!t]
%\centering
%\includegraphics[width=1.9in]{G2nb.eps}
%\caption{Non-binary kernel $\mathbf{G}_2$, with $\gamma$ a non-zero element of $\mathbb{GF}(q)$.}
%\label{G2kernel}
%\end{figure}
\begin{figure}[!t]
\centering
\includegraphics[width=3in]{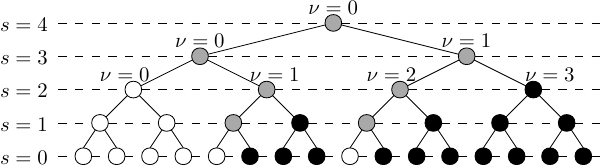}
\caption{Binary-tree representation of a non-binary polar code with $N=16$ and $K=8$. The black and white leaf nodes are information and frozen symbols, respectively.}
\label{bintree}
\end{figure}

Inspired by the state-of-the-art decoding algorithms from non-binary LDPC (NBLDPC) codes literature, we use the LLR-domain NB decoding algorithm of \cite{ems_LDPC} along with successive cancellation (SC) decoding \cite{arikan} schedule to decode the NBPCs. We refer to this algorithm as LLR-domain NB successive-cancellation (LNBSC).

Successive cancellation decoding can be understood using a binary-tree representation of the polar code. In Fig. \ref{bintree}, a binary tree for an NBPC $(16, 8)$ is shown where $s$ denotes the level in the decoding tree and $0 \leq \nu \leq 2^{n-s}$ is the node index from left to right. $N_s=2^s$ is used to denote the length of a node rooted at level $s$. The nodes of the tree can be identified with $(\nu, s)$ pair where each one, except the leaf nodes, has two children: the left child-node $(2\nu, s-1)$ and the right child-node $(2\nu+1, s-1)$. For example, $(0, n)$ denotes the root node at the top of the tree with $(0, n-1)$ and $(1, n-1)$ as its children. Also, $(j, 0)$ represents the $j$th leaf node.

The LNBSC algorithm is implemented in the LLR domain. As there are $q=2^p$ possibilities for each codeword symbol, a scalar LLR, such as bit-LLRs in binary polar decoding, is not sufficient to represent the messages during the decoding process. As such, for each codeword symbol, an LLR vector is defined to reflect all the $q$ possibilities. In this regard, the vector of channel symbol-LLRs associated to the $i$th codeword symbol $c_i$, $0\leq i \leq N-1$, is defined as $\boldsymbol{\ell}^{\text{ch}}_i=[\ell^{\text{ch}}_{i,0},\dots,\ell^{\text{ch}}_{i,{q-1}}]^T$, where the elements of $\boldsymbol{\ell}^{\text{ch}}_i$ are obtained as
\begin{equation}
\ell^{\text{ch}}_{i,\theta} =\ln\frac{p(c_i = \hat{\theta}|\mathbf{y})}{p(c_i = \theta|\mathbf{y})}, \: \:  \: \: \:\hat{\theta}, \theta \in \mathbb{GF}(q).
\label{symbLLR}
\end{equation}

%The EMSSC algorithm can be implemented in LLR domain where the vector of channel symbol LLRs associated to $i$th codeword symbol $c_i$, $0\leq i \leq N-1$, is denoted by $\boldsymbol{\ell}_i=[\ell_{i,0},\dots,\ell_{i,{q-1}}]^T$. The elements of $\boldsymbol{\ell}_i$ are obtained as
%\begin{equation}
%\ell_{i,\theta} =\ln\frac{p(c_i = \hat{\theta}|\mathbf{y})}{p(c_i = \theta|\mathbf{y})}, \: \:  \: \: \:\hat{\theta}, \theta \in \mathbb{GF}(q).
%\label{symbLLR}
%\end{equation}
The symbol $\hat{\theta}$ denotes the most likely symbol associated with $c_i$. Based on \eqref{symbLLR}, all the elements of $\boldsymbol{\ell}^{\text{ch}}_i$ are non-negative and $\ell^{\text{ch}}_{i,\hat{\theta}}=\ln(1)=0$ meaning that at least one element of $\boldsymbol{\ell}^{\text{ch}}_i$ is equal to zero. The symbol LLRs in \eqref{symbLLR} can either be calculated from received $p$-ary constellation symbols or extracted from bit log-likelihood ratios (LLRs). Regarding the latter case, let the channel bit LLRs corresponding to different bits be denoted by $\eta^{\text{ch}}_k$, $0 \leq k < N_b-1$. Also, suppose the binary representation of $\theta \in \mathbb{GF}(q)$ be denoted as $(\theta(0),\dots,\theta(p-1))$. Then, it can be shown that the channel symbol-LLRs, $\ell^{\text{ch}}_{i,\theta}$, and the channel bit-LLRs, have the following relation,
\begin{equation}
\ell^{\text{ch}}_{i,\theta}=\sum_{j=0}^{p-1}(\theta(j)\oplus \hd(\eta^{\text{ch}}_{ip+j})|\eta^{\text{ch}}_{ip+j}| , \: \: \: \: \: 0\leq i \leq N-1,
\label{bin2symllr}
\end{equation}
where $\oplus$ is XOR operator and $\hd(\cdot)$ makes hard decision on bit-LLRs as
\begin{equation}
{\hd(\eta)} = \begin{cases}
0,&{\text{if}}\ \eta > 0; \\ 
{1,}&{\text{otherwise.}} 
\end{cases}
\end{equation}
%In \eqref{bin2symllr}, it is assumed that every $p$ adjacent bits of the transmitted sequence represent the binary equivalent of a non-binary codeword symbol.
\begin{figure}[!t]
\centering
\includegraphics[width=3.2in]{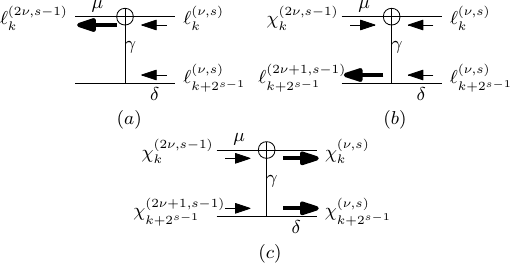}
\caption{Three types of messages that are calculated at node $(
\nu, s)$: (a) CN operation (message toward the left child), (b) VN operation (message toward the right child), and (c) message toward the parent node. }
\label{LR_msg}
\end{figure}

 Decoding the NBPC codes can be viewed as the exchange of messages among the nodes of the binary tree. Let us denote the soft information associated to $i$th symbol of the $(\nu, s)$ node by the length-$q$ vector  $\boldsymbol{\ell}_{i}^{(\nu, s)}=[{\ell}_{i,0}^{(\nu, s)},\dots,{\ell}_{i,q-1}^{(\nu, s)}]^T$, $0\leq i \leq N_s-1$. Each node $(\nu, s)$ receives a $q\times N_s$ soft-information matrix $\bold{L}^{(\nu, s)}=[\boldsymbol{\ell}_0^{(\nu, s)}, \dots, \boldsymbol{\ell}_{N_s-1}^{(\nu, s)}]$  from its parent node and computes the vector of hard-symbol estimates $\boldsymbol{\chi}^{(\nu, s)} = [\chi_0^{(\nu, s)}, \dots, \chi_{N_s-1}^{(\nu, s)}]^T$. The hard-symbols, $\boldsymbol{\chi}^{(\nu, s)}$, are then passed to the parent node, i.e., $(\lfloor \nu/2 \rfloor,s+1)$. Based on these notations, the root node receives $\bold{L}^{(0, n)}=[\boldsymbol{\ell}^{\text{ch}}_0, \dots, \boldsymbol{\ell}^{\text{ch}}_{N-1}]$ from the channel and returns the estimated NBPC codeword $ \boldsymbol{\chi}^{(0, n)}=[\chi_0^{(0, n)}, \dots, \chi_{N-1}^{(0, n)}]^T$, $N=2^n$, as its output.

\begin{remark}
Suppose the $i$th symbol of the node $(\nu,s)$ is denoted by $c^{(\nu,s)}_i$ and its associated soft vector is $\boldsymbol{\ell}_i^{(\nu,s)}$. The LLR vector associated to symbols $\gamma c^{(\nu,s)}_i$ and $c^{(\nu,s)}_i+\beta$, $\gamma, \beta \in \mathbb{GF}(q)$, can be obtained by permuting the elements of $\boldsymbol{\ell}_i^{(\nu,s)}$.
\end{remark}
\begin{example}
Suppose $f(x)=x^2+x+1$ is adopted as the primitive polynomial and $f(\alpha)=0$. The elements of $\mathbb{GF}(4)$ can then be expressed as $\{0,1,\alpha, \alpha^2\}$ with binary representation  $\{(0,0),(0,1),(1,0),(1,1)\}$ and decimal representation $\{0,1,2,3\}$. Assume the LLR vector associated to $c^{(\nu,s)}_i$ is $\boldsymbol{\ell}_i^{(\nu,s)}=[\ell_0, \ell_1,\ell_2,\ell_3]^T$, and we would like to obtain the LLR-vectors $\boldsymbol{\ell}_i^{(\nu,s)^\times}$ and $\boldsymbol{\ell}_i^{(\nu,s)^+}$ corresponding to symbols $\alpha c^{(\nu,s)}_i$ and $c^{(\nu,s)}_i + 1$, respectively. Since we have
\begin{equation*}
\begin{split}
\{0,1,\alpha, \alpha^2\}\xrightarrow{\times \alpha}\{0,\alpha,\alpha^2, 1\},\\
\{0,1,\alpha, \alpha^2\}\xrightarrow{+ 1}\{1,0,\alpha^2, \alpha\},
\end{split}
\end{equation*}
we can write
\begin{equation*}
\boldsymbol{\ell}_i^{(\nu,s)^\times} = [\ell_0, \ell_2, \ell_3, \ell_1]^T,
\boldsymbol{\ell}_i^{(\nu,s)^+} = [\ell_1, \ell_0, \ell_3, \ell_2]^T.
\end{equation*}
Therefore, the LLR vectors can be expressed as $\boldsymbol{\ell}_i^{(\nu,s)^\times}=\boldsymbol{\Pi}^{\times}_{\alpha}\boldsymbol{\ell}_i^{(\nu,s)}$ and $\boldsymbol{\ell}_i^{(\nu,s)^+}=\boldsymbol{\Pi}^+_{1}\boldsymbol{\ell}_i^{(\nu,s)}$, where
\begin{equation*}
\boldsymbol{\Pi}^\times_{\alpha}=\left[ \begin{array}{cccr}
1& 0 & 0& 0\\ 
0 & 0 & 1 & 0\\
0& 0 & 0 & 1 \\
0 & 1 & 0 & 0
\end{array}\right],
\boldsymbol{\Pi}^+_{1}=\left[ \begin{array}{cccr}
0& 1 & 0& 0\\ 
1 & 0 & 0 & 0\\
0& 0 & 0 & 1 \\
0 & 0 & 1 & 0
\end{array}\right].
\end{equation*}
\end{example}
\begin{definition}\label{def:confset}
For a symbol $\phi \in \mathbb{GF}(q)$, the configuration set $\mathcal{S}_{\phi}$ is defined as $\mathcal{S}_{\phi} =\{(\zeta_0, \zeta_1) \in \mathbb{GF}(q)|\zeta_0+ \zeta_1 = \phi \}$.
\end{definition}
\begin{example}
Suppose $f(x)=x^2+x+1$ is adopted as the primitive polynomial of $\mathbb{GF}(4)$ with $f(\alpha)=0$. We can define the following configuration sets:
\begin{equation*}
\begin{split}
\mathcal{S}_{0}=\{(0,0),(1,1),(\alpha,\alpha), (\alpha^2, \alpha^2)\},\\
\mathcal{S}_{1}=\{(0,1),(1,0),(\alpha,\alpha^2), (\alpha^2, \alpha)\},\\
\mathcal{S}_{\alpha}=\{(0,\alpha),(\alpha,0),(1,\alpha^2), (\alpha^2, 1)\},\\
\mathcal{S}_{\alpha^2}=\{(0,\alpha^2),(\alpha^2,0),(1,\alpha), (\alpha, 1)\}.
\end{split}
\end{equation*}
\end{example}
With the exception of leaf nodes, upon receiving $\bold{L}^{(\nu, s)}$, each node in the polar code tree sequentially generates two matrices of soft information for its children, i.e., first $\bold{L}^{(2\nu, s)}$ is generated and after receiving the estimated output from the left child, $\bold{L}^{(2\nu+1, s)}$ is calculated. With respect to the CN operation in Fig. \ref{LR_msg}.a, suppose the vector $\tilde{\boldsymbol{\ell}}_{k+2^{s-1}}^{(\nu, s)}=[\tilde{{\ell}}_{k+2^{s-1},0}^{(\nu, s)}, \dots, \tilde{{\ell}}_{k+2^{s-1},q-1}^{(\nu, s)}]^T$ is defined as
\begin{equation}
\tilde{\boldsymbol{\ell}}_{k+2^{s-1}}^{(\nu, s)}=\boldsymbol{\Pi}^\times_\gamma\boldsymbol{\Pi}^{\times}_\frac{1}{\delta} \boldsymbol{\ell}_{k+2^{s-1}}^{(\nu, s)}, 0\leq k \leq 2^{s-1},
\label{perm_left}
\end{equation}
where $\boldsymbol{\Pi}^\times_\gamma$ and $\boldsymbol{\Pi}^{\times}_\frac{1}{\delta}$ are $q\times q$ permutation matrices corresponding to $\gamma$ and inverse of $\delta$ in $\mathbb{GF}(q)$, respectively. Then, the message toward the left child node can be calculated as
\begin{equation}
{\ell}_{k,\phi}^{(2\nu, s-1)}=-\ln\bigg(\sum_{(\zeta_0, \zeta_1) \in \mathcal{S}_{\phi}}e^{-({\ell}_{k, \zeta_0}^{(\nu, s)}+\tilde{{\ell}}_{k+2^{s-1}, \zeta_1}^{(\nu, s)})}\bigg),
\label{NBcnSPA}
\end{equation}
where $0\leq k \leq 2^{s-1}$ and $\mathcal{S}_\phi$ is the configuration set defined in Definition \ref{def:confset}. Note that because of the coefficient $\mu$, the elements of $\boldsymbol{\ell}_{k}^{(2\nu, s-1)}$ are required to be re-permuted as
\begin{equation}
\boldsymbol{\ell}_{k}^{(2\nu, s-1)} =\boldsymbol{\Pi}^{\times}_\frac{1}{\mu} \boldsymbol{\ell}_{k}^{(2\nu, s-1)}.
\end{equation}

\begin{remark}
A hardware-friendly approximation of Equation \eqref{NBcnSPA} known as extended min-sum (EMS) can also be obtained as
\begin{equation}
{\ell}_{k,\phi}^{(2\nu, s-1)}=\min_{(\zeta_0, \zeta_1) \in \mathcal{S}_{\phi}}({\ell}_{k, \zeta_0}^{(\nu, s)}+\tilde{{\ell}}_{k+2^{s-1}, \zeta_1}^{(\nu, s)}).
\label{NBcn}
\end{equation}
This implementation will be examined in the simulation result section.
\end{remark}

 After receiving the vector of hard symbols from the left child, the VN operation in Fig. \ref{LR_msg}.b is performed. Let the vectors $\breve{\boldsymbol{\ell}}_{k}^{(\nu, s)}=[\breve{{\ell}}_{k,0}^{(\nu, s)},\dots,\breve{{\ell}}_{k,q-1}^{(\nu, s)}]^T$ and $\breve\breve{\boldsymbol{\ell}}_{k+2^{s-1}}^{(\nu, s)}=[\breve\breve{{\ell}}_{{k+2^{s-1}},0}^{(\nu, s)},\dots,\breve\breve{{\ell}}_{{k+2^{s-1}},q-1}^{(\nu, s)}]^T$be defined as
\begin{equation}\label{eq:gperm}
\breve{\boldsymbol{\ell}}_{k}^{(\nu, s)}=\boldsymbol{\Pi}_\frac{1}{\gamma}^{\times}\boldsymbol{\Pi}^+_{\mu\chi_k^{(2\nu,s-1)}}\boldsymbol{\ell}_{k}^{(\nu, s)},
\end{equation}
\begin{equation}
\breve\breve{\boldsymbol{\ell}}_{k+2^{s-1}}^{(\nu, s)}=\boldsymbol{\Pi}_\frac{1}{\delta}^{\times}\boldsymbol{\ell}_{k+2^{s-1}}^{(\nu, s)},
\end{equation}
where $\boldsymbol{\Pi}^+_{\mu\chi_k^{(2\nu,s-1)}}$ is a $q \times q$ permutation matrix associated with the hard-symbol $\mu\chi_k^{(2\nu,s-1)}$ and $\boldsymbol{\Pi}_\frac{1}{\gamma}^{\times}$, $\boldsymbol{\Pi}_\frac{1}{\delta}^{\times}$ are the inverse of permutation matrices $\boldsymbol{\Pi}^\times_\gamma$, $\boldsymbol{\Pi}^\times_\delta$. The soft information toward the right child node can then be computed as
\begin{equation}
\acute{{\ell}}_{k+2^{s-1},\phi}^{(2\nu+1, s-1)}=\breve{{\ell}}_{k,\phi}^{(\nu, s)} + \breve\breve{\ell}_{k+2^{s-1},\phi}^{(\nu, s)},\label{NBvn}
\end{equation}
\begin{equation}
{{\ell}}_{k+2^{s-1},\phi}^{(2\nu+1, s-1)} =\acute{{\ell}}_{k+2^{s-1},\phi}^{(2\nu+1, s-1)}- \min_{\beta \in \mathbb{GF}(q)}(\acute{{\ell}}_{k+2^{s-1},\beta}^{(2\nu+1, s-1)}),
\label{bias_rem}
\end{equation}
for $0\leq k \leq 2^{s-1}$ and $\phi 
\in \mathbb{GF}(q)$.
Following the symbol-LLR definition in \eqref{symbLLR}, Equation \eqref{bias_rem} is used to ensure that at least one element of the vector ${{\ell}}_{k+2^{s-1},\phi}^{(2\nu+1, s-1)}$ is equal to zero. On the other hand, the hard symbols in Fig. \ref{LR_msg}.c toward the parent node can be computed as
\begin{equation}
{\chi_{i}^{(\nu, s)}} = \begin{cases}
\mu\chi_{i}^{(2\nu, s-1)}+\gamma \chi_{i}^{(2\nu+1, s-1)},&{\text{}}\ 0 \leq i < 2^{s-1}; \\ 
{\delta\chi_{i-2^{s-1}}^{(2\nu+1, s-1)},}&{\text{}}\ 2^{s-1} \leq i < 2^{s},
\end{cases}\label{NBupmsg}
\end{equation}
where the addition and multiplication are done over $\mathbb{GF}(q)$.

The decoding process starts from the root node (node $(0,n)$) at the top of the tree by receiving $\bold{L}^{(0, n)}$, i.e., channel symbol-LLRs. Then, each node computes the LLR values toward its children until the leaf nodes receive the soft information from their parents. At the $i$th leaf node (node $(i, 0)$), the $i$th input symbol $u_i$ is estimated as
\begin{equation}
{\hat{u}_i}=\boldsymbol{\chi}^{(i,0)} = \begin{cases}
0,&{\text{if}}\ i \in \mathcal{I}^c;\\ 
{\mathcal{H}(\mathbf{L}^{(i,0)}),}&{\text{otherwise,}} 
\end{cases}
\end{equation}
where $\mathcal{H}(\mathbf{L})$ makes non-binary hard-decision on each column of $\mathbf{L}$ as\footnote{Regarding the leaf nodes, $\mathbf{L}^{(i,0)}$ has only one column, i.e., $\boldsymbol{\ell}^{(i,0)}_0$.}
\begin{equation}
\mathcal{H}(\boldsymbol{\ell})=\underset{\phi \in \mathbb{GF}(q)}{\argmin}(\ell_\phi).
\label{NB_harddec}
\end{equation}
The logic behind the definition of the above non-binary hard decision is that, according to \eqref{symbLLR}, the minimum LLR value in an LLR vector corresponds to the maximum likelihood symbol.
\section{Proposed Fast-LNBSC Decoding Based on Special Nodes}\label{sec:fastdecoding}
We find the following notations and definition adopted from \cite{hanif, ardakani} quite useful for better understanding the fast decoder.

\begin{definition}\label{def_sets}
Suppose $\{\!\{ N_s\}\!\}$ denotes $\{0,1,\dots,N_s-1\}$ where $N_s=2^s$. For a node $(\nu,s)$, the sets $\mathcal{A}_{(\nu,s)}$ and $\mathcal{A}^c_{(\nu,s)}$ are defined as $\mathcal{A}_{(\nu,s)}=\{i
:i \in \{\!\{ N_s\}\!\},\:\: \text{and} \:\: N_s\nu+i \in \mathcal{I}\}$ and $\mathcal{A}^c_{(\nu,s)}=\{i
:i \in \{\!\{ N_s\}\!\},\:\: \text{and} \:\: N_s\nu+i \in \mathcal{I}^c\}$, respectively.
\end{definition}
\begin{example}
Based on Definition \ref{def_sets}, for the nodes $(0,3)$ and $(1,3)$ in Fig. \ref{bintree}, we have $\mathcal{A}_{(0,3)}=\{5,6,7\}$ and $\mathcal{A}^c_{(1,3)}=\{0\}$, respectively.
\end{example}
The following definition will prove helpful in Lemma \ref{lem1} and later.
\begin{definition}
The facilitator matrix $\boldsymbol{\mathsf{F}}_2^{\bigotimes k}=[\boldsymbol{\mathsf{f}}_0^{(k)}, \dots, \boldsymbol{\mathsf{f}}_{2^k-1}^{(k)}]$ with columns $\boldsymbol{\mathsf{f}}_i^{(k)}=[\mathsf{f}_{i,0}^{k},\dots,\mathsf{f}_{i,2^k-1}^{k}]^T, i=0,\dots,2^k-1$, is defined as the $k$th Kronecker power of matrix 
\begin{equation}
\boldsymbol{\mathsf{F}}_2 = \begin{bmatrix} 1 & 0 \\ 
\frac{\gamma}{\mu} &  1 \end{bmatrix}.
\end{equation}
\end{definition}

The following result is simple to prove.
\begin{lemma}
Consider $\mathbf{G}^{\bigotimes k}_2 = [\mathbf{g}^{(k)}_0, \dots, \mathbf{g}^{(k)}_{2^k-1}]$, $k \in \mathbb{Z^+}$, where $\mathbf{g}^{(k)}_{i}=[{g}^{(k)}_{i,0}\dots,{g}^{(k)}_{i,2^k-1}]^T$, $i=0,\dots, 2^k-1$, are column-vectors with $2^k$ elements that form the columns of matrix $\mathbf{G}^{\bigotimes k}_2$. Also, let the matrix $\overline{\mathbf{G}}^{\bigotimes k}_2 $ be defined as
\begin{equation}
\overline{\mathbf{G}}^{\bigotimes k}_2 = {\mathbf{G}}^{\bigotimes k}_2 \diag(\mathbf{g}^{(k)^{-1}}_0),\label{eq:Gbar}
\end{equation} 
where $\mathbf{g}^{(k)^{-1}}_0$ is the first column of $\mathbf{G}^{{\bigotimes k}^{-1}}_2$, i.e., the inverse of $\mathbf{G}^{\bigotimes k}_2$, and  $\diag(\mathbf{g}^{(k)^{-1}}_0)$ is a $2^k \times 2^k$ diagonal matrix whose diagonal entries are the $2^k$ elements of $\mathbf{g}^{(k)^{-1}}_0$. Then, all the nonzero elements of row $j$, $j = 0,\dots,2^k-1$, in $\overline{\mathbf{G}}^{\bigotimes k}_2$ are equal to $\mathsf{f}^{(k)}_{0,j}$. 
\label{lem1}
\end{lemma}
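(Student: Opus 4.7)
The plan is to factor $\mathbf{G}_2 = \boldsymbol{\mathsf{F}}_2\mathbf{D}$ with $\mathbf{D}=\diag(\mu,\delta)$, so that the mixed-product property of the Kronecker product gives $\mathbf{G}_2^{\bigotimes k} = \boldsymbol{\mathsf{F}}_2^{\bigotimes k}\mathbf{D}^{\bigotimes k}$ with $\mathbf{D}^{\bigotimes k}$ still diagonal. This separates the structural ``facilitator'' pattern from the per-stage scaling and explains why the claim is naturally stated in terms of $\boldsymbol{\mathsf{F}}_2$ rather than $\mathbf{G}_2$ directly.

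The next step is to show that the diagonal rescaling built into the definition of $\overline{\mathbf{G}}_2^{\bigotimes k}$ exactly absorbs the factor $\mathbf{D}^{\bigotimes k}$. From $(\mathbf{G}_2^{\bigotimes k})^{-1} = (\mathbf{D}^{\bigotimes k})^{-1}(\boldsymbol{\mathsf{F}}_2^{\bigotimes k})^{-1}$ and the fact that left-multiplication by a diagonal only rescales rows, one reads off $\mathbf{g}_0^{(k)^{-1}} = (\mathbf{D}^{\bigotimes k})^{-1}\boldsymbol{\mathsf{f}}_0^{(k)^{-1}}$, where $\boldsymbol{\mathsf{f}}_0^{(k)^{-1}}$ is the first column of $(\boldsymbol{\mathsf{F}}_2^{\bigotimes k})^{-1}$. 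Since diagonal matrices multiply componentwise, this lifts to $\diag(\mathbf{g}_0^{(k)^{-1}}) = (\mathbf{D}^{\bigotimes k})^{-1}\diag(\boldsymbol{\mathsf{f}}_0^{(k)^{-1}})$. Substituting into \eqref{eq:Gbar} telescopes the two diagonal factors and yields the key reduction
\[
\overline{\mathbf{G}}_2^{\bigotimes k} \;=\; \boldsymbol{\mathsf{F}}_2^{\bigotimes k}\,\diag(\boldsymbol{\mathsf{f}}_0^{(k)^{-1}}),
\]
a matrix that depends only on the ratio $\gamma/\mu$, with $\mu$ and $\delta$ having completely dropped out.

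Finally, I would evaluate the entries of the right-hand side using the standard bitwise formula for Kronecker powers. Because $\mathbb{GF}(q)$ has characteristic $2$, a direct check gives $\boldsymbol{\mathsf{F}}_2^2 = \mathbf{I}$, so $\boldsymbol{\mathsf{f}}_0^{(k)^{-1}} = \boldsymbol{\mathsf{f}}_0^{(k)}$. Writing $i$ and $j$ in binary as $(i_0,\dots,i_{k-1})$ and $(j_0,\dots,j_{k-1})$, the $(j,i)$ entry of $\overline{\mathbf{G}}_2^{\bigotimes k}$ factorizes bitwise as $\prod_{l}(\boldsymbol{\mathsf{F}}_2)_{j_l,i_l}(\boldsymbol{\mathsf{F}}_2)_{i_l,0}$. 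A short case analysis on the four possible values of $(j_l,i_l)$ shows that each bit-factor vanishes iff $(j_l,i_l)=(0,1)$, so the whole entry is nonzero iff $i$ is bitwise-dominated by $j$; in that case every bit with $j_l=1$ contributes $\gamma/\mu$ (regardless of $i_l$) while every bit with $j_l=0$ contributes $1$, giving the common value $(\gamma/\mu)^{w(j)} = \mathsf{f}^{(k)}_{0,j}$ independent of $i$. The main obstacle is the bookkeeping in the diagonal-cancellation step, where one must verify carefully that $(\mathbf{D}^{\bigotimes k})^{-1}$ really factors out of $\diag(\mathbf{g}_0^{(k)^{-1}})$; once that reduction is in place, the bit-level calculation is routine.
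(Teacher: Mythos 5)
Your proof is correct. The paper offers no proof of Lemma \ref{lem1} at all --- it is introduced with ``The following result is simple to prove'' and supported only by the $k=2$ example --- so there is no argument to compare yours against; presumably the authors had in mind an induction on $k$ in the spirit of their Lemma \ref{lem_evennumber}. Your route is different and arguably more illuminating: the factorization $\mathbf{G}_2=\boldsymbol{\mathsf{F}}_2\diag(\mu,\delta)$ together with the mixed-product property of the Kronecker power cleanly isolates the facilitator pattern from the per-stage scaling, the diagonal-cancellation step $\diag(\mathbf{g}_0^{(k)^{-1}})=(\mathbf{D}^{\bigotimes k})^{-1}\diag(\boldsymbol{\mathsf{f}}_0^{(k)^{-1}})$ is a valid componentwise identity, and the resulting closed form $\overline{\mathbf{G}}_2^{\bigotimes k}=\boldsymbol{\mathsf{F}}_2^{\bigotimes k}\diag(\boldsymbol{\mathsf{f}}_0^{(k)})$ gives the row structure in one bitwise computation, consistent with the paper's $k=2$ example. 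Two points worth making explicit if you write this up: your use of $\boldsymbol{\mathsf{F}}_2^2=\mathbf{I}$ genuinely requires characteristic $2$ (in odd characteristic the $(j_l,i_l)=(1,1)$ and $(1,0)$ bit-factors would differ by a sign and the lemma as stated would fail), which is legitimate here since the paper works over $\mathbb{GF}(2^p)$; and your argument in fact proves the stronger closed-form statement that the nonzero positions of row $j$ are exactly the columns $i$ bitwise dominated by $j$, of which the lemma's claim is a corollary.
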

%\renewcommand\qedsymbol{$\blacksquare$}
%\begin{proof}
%by induction.
%\end{proof}
\begin{example}
Consider the generator matrix ${\mathbf{G}}^{\bigotimes 2}_2$. Then, we have
\begin{equation*}
\begin{split}
\overline{\mathbf{G}}^{\bigotimes 2}_2 &=
\overbrace{\left[ \begin{array}{c|c|c|r}
\mu^2& 0 & 0& 0\\ 
\mu\gamma & \mu\delta & 0 & 0\\
\mu\gamma & 0 & \mu\delta & 0 \\
\gamma^2 & \gamma\delta & \gamma\delta & \delta^2
\end{array}\right]}^{{\mathbf{G}}^{\bigotimes 2}_2}\times \overbrace{\left[ \begin{array}{cccr}
\frac{1}{\mu^2}& 0 & 0& 0\\ 
0 & \frac{\gamma}{\mu^2\delta} & 0 & 0\\
0 & 0 & \frac{\gamma}{\mu^2\delta} & 0 \\
0 & 0 & 0 & \frac{\gamma^2}{\mu^2\delta^2}
\end{array}\right]}^{\diag(\mathbf{g}^{(2)^{-1}}_0)}  \\
&= \left[ \begin{array}{cccr}
1& 0 & 0& 0\\ 
\frac{\gamma}{\mu} & \frac{\gamma}{\mu} & 0 & 0\\
\frac{\gamma}{\mu} & 0 & \frac{\gamma}{\mu} & 0 \\
\frac{\gamma^2}{\mu^2} & \frac{\gamma^2}{\mu^2} & \frac{\gamma^2}{\mu^2}& \frac{\gamma^2}{\mu^2}
\end{array}\right].
\end{split}
\end{equation*}
It can be seen that all the nonzero elements at rows $0$ to $3$ are equal to the corresponding elements in $\boldsymbol{\mathsf{f}}_0^{(2)}=[1,\frac{\gamma}{\mu},\frac{\gamma}{\mu},\frac{\gamma^2}{\mu^2}]^T$ at positions $0$ to $3$, respectively.
\end{example}
\begin{lemma}\label{lem_g0rN}
Suppose $\mathbf{g}^{(k)^{-1}}_{0}=[{g}^{(k)^{-1}}_{0,0}\dots,{g}^{(k)^{-1}}_{0,2^k-1}]^T$ and $\mathbf{r}^{(k)^T}_{2^k-1}=[{r}^{(k)}_{2^k-1,0},\dots,{r}^{(k)}_{2^k-1,2^k-1}]$ are the first column and last row of the matrices ${\mathbf{G}}^{{\bigotimes k}^{-1}}_2$ and ${\mathbf{G}}^{\bigotimes k}_2$, respectively. Then, 
\begin{equation}
\diag(\mathbf{g}^{(k)^{-1}}_0)\mathbf{r}^{(k)}_{2^k-1}=\frac{\gamma^k}{\mu^k}\mathbf{1}_{2^k\times 1},
\end{equation}
where $\gamma, \mu \in \mathbb{GF}(q)\backslash 0$, and $\mathbf{1}_{2^k\times 1}$ is an all-one vector with $2^k$ entries.
\end{lemma}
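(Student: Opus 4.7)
My plan is to exploit the Kronecker product structure of $\mathbf{G}^{\bigotimes k}_2$, combined with the identity $(\mathbf{G}^{\bigotimes k}_2)^{-1} = (\mathbf{G}_2^{-1})^{\bigotimes k}$, and the elementary fact that $\diag(\mathbf{u})\mathbf{v}$ equals the Hadamard product $\mathbf{u}\odot\mathbf{v}$. The key observation I would use is that the first column of $\mathbf{A}\otimes\mathbf{B}$ is the Kronecker product of the first columns of $\mathbf{A}$ and $\mathbf{B}$, while the last row of $\mathbf{A}\otimes\mathbf{B}$ (read as a row vector) is the Kronecker product of the last rows; together with the identity $(\mathbf{a}\otimes\mathbf{b})\odot(\mathbf{c}\otimes\mathbf{d})=(\mathbf{a}\odot\mathbf{c})\otimes(\mathbf{b}\odot\mathbf{d})$, this will collapse the product in question into the $k$-fold Kronecker power of a single $2$-vector.

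First I would settle the base case $k=1$ by a direct computation of $\mathbf{G}_2^{-1}$ in $\mathbb{GF}(q)$. In characteristic $2$ (which is the setting here, since $q = 2^p$), the cofactor formula yields first column $[1/\mu,\,\gamma/(\mu\delta)]^T$, while the transpose of the last row of $\mathbf{G}_2$ is $[\gamma,\,\delta]^T$; their Hadamard product equals $[\gamma/\mu,\,\gamma/\mu]^T = (\gamma/\mu)\mathbf{1}_{2\times 1}$. For general $k$, I would then unfold the Kronecker power to write
\begin{equation*}
\mathbf{g}^{(k)^{-1}}_0 = \bigl([1/\mu,\,\gamma/(\mu\delta)]^T\bigr)^{\bigotimes k}, \qquad \mathbf{r}^{(k)}_{2^k-1} = \bigl([\gamma,\,\delta]^T\bigr)^{\bigotimes k},
\end{equation*}
and iterate the Hadamard/Kronecker identity $k$ times to obtain
\begin{equation*}
\diag(\mathbf{g}^{(k)^{-1}}_0)\mathbf{r}^{(k)}_{2^k-1} = \bigl(\tfrac{\gamma}{\mu}\mathbf{1}_{2\times 1}\bigr)^{\bigotimes k} = \frac{\gamma^k}{\mu^k}\mathbf{1}_{2^k\times 1},
\end{equation*}
which is the stated identity.

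The main thing to watch will be the Kronecker bookkeeping---verifying that ``first column of the inverse'' and ``transpose of the last row'' commute with $\otimes$ under the convention $\mathbf{G}^{\bigotimes k}_2 = \mathbf{G}_2 \otimes \mathbf{G}^{\bigotimes(k-1)}_2$---together with the characteristic-$2$ cancellation $-\gamma = \gamma$ when inverting $\mathbf{G}_2$; without that cancellation the base-case Hadamard product would evaluate to $[\gamma/\mu,\,-\gamma/\mu]^T$ and the conclusion would fail. A more compact alternative I might take instead is to invoke Lemma~\ref{lem1} for row $j = 2^k-1$ of $\overline{\mathbf{G}}^{\bigotimes k}_2$: that row coincides with $\diag(\mathbf{g}^{(k)^{-1}}_0)\mathbf{r}^{(k)}_{2^k-1}$, a short Kronecker argument shows it has no zero entries (its entries are products of nonzero elements of $\mathbf{r}^{(k)}_{2^k-1}$ and of $\mathbf{g}^{(k)^{-1}}_0$, each of which decomposes Kronecker-wise into nonzero factors), so by Lemma~\ref{lem1} every entry equals $\mathsf{f}^{(k)}_{0,2^k-1}$, and the lower-triangular Kronecker structure of $\boldsymbol{\mathsf{F}}_2^{\bigotimes k}$ immediately identifies this entry as $(\gamma/\mu)^k$.
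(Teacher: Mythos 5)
The paper states this lemma with no proof at all (the ``easily proved using induction'' remark that follows it refers to the next lemma), so there is no argument of record to compare against; your proposal supplies a complete and correct one. The direct route checks out: $(\mathbf{G}_2^{\otimes k})^{-1}=(\mathbf{G}_2^{-1})^{\otimes k}$, the first column of a Kronecker product of square matrices is the Kronecker product of the first columns, the last row is the Kronecker product of the last rows (consistent with the paper's recursion $\mathbf{G}_2^{\otimes k}=\mathbf{G}_2\otimes\mathbf{G}_2^{\otimes(k-1)}$, as the $k=2$ example confirms), and $\diag(\mathbf{u})\mathbf{v}=\mathbf{u}\odot\mathbf{v}$ distributes over $\otimes$; the two base vectors are $[1/\mu,\ \gamma/(\mu\delta)]^T$ and $[\gamma,\ \delta]^T$, whose entrywise product is $(\gamma/\mu)\mathbf{1}_{2\times 1}$, so the $k$-fold Kronecker power gives $(\gamma/\mu)^k\mathbf{1}_{2^k\times 1}$. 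Your characteristic-$2$ caveat is the one substantive point and is worth recording: the $(2,1)$ entry of $\mathbf{G}_2^{-1}$ is $-\gamma/(\mu\delta)$, so the identity as stated genuinely needs $-1=1$, which holds since $q=2^p$, but would fail over fields of odd characteristic despite the paper's footnote claiming applicability to arbitrary non-binary fields. Your shortcut via Lemma \ref{lem1} is also sound --- the vector in question is exactly row $2^k-1$ of $\overline{\mathbf{G}}_2^{\otimes k}$, and it has no zero entries because both $([\gamma,\delta]^T)^{\otimes k}$ and $\mathbf{g}_0^{(k)^{-1}}$ are entrywise nonzero, so every entry equals $\mathsf{f}^{(k)}_{0,2^k-1}=(\gamma/\mu)^k$ --- but since the paper also leaves Lemma \ref{lem1} unproved, the self-contained Kronecker--Hadamard computation is the preferable version.
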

The following result can be easily proved using induction.
\begin{lemma}\label{lem_evennumber}
Suppose the vector $\mathbf{r}^{(k)^T}_{i}$ ($\overline{\mathbf{r}}^{(k)^T}_{i}$) denotes the $i$th row of the matrix ${\mathbf{G}}^{\bigotimes k}_2$ ($\overline{\mathbf{G}}^{\bigotimes k}_2$). The number of nonzero elements at each row $\mathbf{r}^{(k)^T}_{i}$ ($\overline{\mathbf{r}}^{(k)^T}_{i}$), $1 \leq i \leq 2^k-1$, is an even number ($i=0$ is excluded.).
\end{lemma}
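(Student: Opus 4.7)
The plan is a straightforward induction on $k$, exploiting the Kronecker-product recursion $\mathbf{G}_2^{\bigotimes k}=\mathbf{G}_2\otimes \mathbf{G}_2^{\bigotimes(k-1)}$ together with the fact that all three kernel coefficients $\mu,\gamma,\delta$ are nonzero. I would index the rows of $\mathbf{G}_2^{\bigotimes k}$ by pairs $(a,b)$ with $a\in\{0,1\}$ and $0\le b\le 2^{k-1}-1$, where $i=a\cdot 2^{k-1}+b$. The block structure then says that row $(0,b)$ is obtained by scaling row $b$ of $\mathbf{G}_2^{\bigotimes(k-1)}$ by $\mu$ and padding the right half with zeros, while row $(1,b)$ is the concatenation of $\gamma$ and $\delta$ times that same row.

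Let $w^{(k)}(i)$ denote the number of nonzero entries of row $i$. Because $\mu,\gamma,\delta\ne 0$, scalar multiplication neither creates nor destroys nonzeros, so the Kronecker description gives $w^{(k)}(0\cdot 2^{k-1}+b)=w^{(k-1)}(b)$ and $w^{(k)}(1\cdot 2^{k-1}+b)=2\,w^{(k-1)}(b)$. The base case $k=1$ is immediate: row $1$ of $\mathbf{G}_2$ is $[\gamma,\delta]$, of weight $2$. For the inductive step, take $1\le i\le 2^k-1$ and split by $a$. If $a=1$, then $w^{(k)}(i)=2\,w^{(k-1)}(b)$ is even regardless of $b$. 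If $a=0$, then $b=i\ge 1$, so the induction hypothesis makes $w^{(k-1)}(b)$ even. Either way $w^{(k)}(i)$ is even, which settles the claim for $\mathbf{G}_2^{\bigotimes k}$.

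For the barred version $\overline{\mathbf{G}}_2^{\bigotimes k}=\mathbf{G}_2^{\bigotimes k}\,\diag(\mathbf{g}_0^{(k)^{-1}})$, right-multiplication by a diagonal matrix whose entries are all nonzero merely rescales each column and therefore preserves the nonzero pattern of every row. So the evenness statement transfers from $\mathbf{G}_2^{\bigotimes k}$ to $\overline{\mathbf{G}}_2^{\bigotimes k}$ once I know that $\mathbf{g}_0^{(k)^{-1}}$ has no zero entries.

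That last fact is the only mildly delicate point of the proof and it falls straight out of Lemma~\ref{lem_g0rN}: the identity $g_{0,j}^{(k)^{-1}}\,r_{2^k-1,j}^{(k)}=\gamma^k/\mu^k$ forces $g_{0,j}^{(k)^{-1}}\ne 0$ as long as $r_{2^k-1,j}^{(k)}\ne 0$ for every $j$, and the $a=1$ branch of the Kronecker recursion above yields a one-line induction showing that the last row of $\mathbf{G}_2^{\bigotimes k}$ is entirely nonzero. With that in hand, the evenness conclusion holds for both $\mathbf{G}_2^{\bigotimes k}$ and $\overline{\mathbf{G}}_2^{\bigotimes k}$, and the lemma follows.
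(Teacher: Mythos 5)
Your proof is correct and follows exactly the route the paper intends: the paper omits the proof of this lemma entirely, stating only that it ``can be easily proved using induction,'' and your induction on $k$ via the block structure of $\mathbf{G}_2\otimes\mathbf{G}_2^{\bigotimes(k-1)}$ is that argument. The extra care you take in checking that every entry of $\mathbf{g}_0^{(k)^{-1}}$ is nonzero (via Lemma~\ref{lem_g0rN} and the all-nonzero last row), so that the column rescaling defining $\overline{\mathbf{G}}_2^{\bigotimes k}$ preserves row supports, is the one point the paper glosses over and is handled correctly.
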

In the following, the output vector of a node at the encoding side is denoted by $\mathbf{c}^{(\nu,s)}=[{c}_0^{(\nu,s)},\dots,{c}^{(\nu,s)}_{N_s-1}]^T$ while the estimated output-vector of a node at the decoding side is represented by $\boldsymbol{\chi}^{(\nu,s)}=[{\chi}_0^{(\nu,s)},\dots,{\chi}^{(\nu,s)}_{N_s-1}]^T$. Moreover, we use $\mathbf{u}_i^j=[u_i,\dots, u_j]^T$ to represent a slice of the input vector $\mathbf{u}$. Based on these notations, $\mathbf{c}^{(\nu,s)}$ is obtained as
\begin{equation}\label{eq_subcodwrd}
\mathbf{c}^{(\nu,s)^T} = \mathbf{u}_{N_s\nu}^{{N_s(\nu+1)-1}^T}{\mathbf{G}}^{\bigotimes s}_2.
\end{equation}
It is easy to verify that for $s=n$, \eqref{eq_subcodwrd} turns into Equation \eqref{eq:polarencfix}. In the rest of the paper, we may also refer to $\mathbf{c}^{(\nu,s)}$ and $\boldsymbol{\chi}^{(\nu,s)}$ as the output codeword and estimated codeword of node $(\nu,s)$, respectively.  
\begin{lemma}\label{lem:c_ctilde}
Let $\overline{\mathbf{G}}^{\bigotimes k}_2$ be defined as in Equation \eqref{eq:Gbar}. Also, let the vectors $\mathbf{c}^{(\nu,s)}$ and $\tilde{\mathbf{c}}^{(\nu,s)}$ be derived as
\begin{equation}
\begin{split}
\mathbf{c}^{(\nu,s)^T} = \mathbf{u}_{N_s\nu}^{{N_s(\nu+1)-1}^T}{\mathbf{G}}^{\bigotimes s}_2,\\
\tilde{\mathbf{c}}^{(\nu,s)^T} = \mathbf{u}_{N_s\nu}^{{N_s(\nu+1)-1}^T}\overline{\mathbf{G}}^{\bigotimes s}_2.
\end{split}
\end{equation}
Then, 
\begin{equation}
\begin{split}
&\tilde{\mathbf{c}}^{(\nu,s)}=\diag(\mathbf{g}^{(s)^{-1}}_0)\mathbf{c}^{(\nu,s)},\\
&{\boldsymbol{c}}^{(\nu,s)}=\gamma^{-s}\mu^s\diag(\mathbf{r}^{(s)}_{N_s-1})\tilde{\boldsymbol{c}}^{(\nu,s)}.
\end{split}
\end{equation}
\end{lemma}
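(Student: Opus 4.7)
My plan is to derive both identities as direct algebraic consequences of the definition of $\overline{\mathbf{G}}^{\bigotimes s}_2$, with the second one reducing immediately to Lemma~\ref{lem_g0rN}.

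First I would establish the identity $\tilde{\mathbf{c}}^{(\nu,s)} = \diag(\mathbf{g}^{(s)^{-1}}_0)\mathbf{c}^{(\nu,s)}$ by substituting \eqref{eq:Gbar} into the definition of $\tilde{\mathbf{c}}^{(\nu,s)^T}$. Associativity of matrix multiplication gives
\begin{equation*}
\tilde{\mathbf{c}}^{(\nu,s)^T} = \mathbf{u}_{N_s\nu}^{{N_s(\nu+1)-1}^T}\mathbf{G}^{\bigotimes s}_2 \diag(\mathbf{g}^{(s)^{-1}}_0) = \mathbf{c}^{(\nu,s)^T}\diag(\mathbf{g}^{(s)^{-1}}_0).
\end{equation*}
Transposing and using that any diagonal matrix equals its own transpose produces the first identity.

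Next I would invert this relation to obtain the second identity. Since $\mu, \gamma, \delta$ are nonzero in $\mathbb{GF}(q)$, the first column of $\mathbf{G}_2^{\bigotimes s^{-1}}$ has every entry nonzero, so $\diag(\mathbf{g}^{(s)^{-1}}_0)$ is invertible and its inverse is the diagonal matrix whose entries are the reciprocals of $\mathbf{g}^{(s)^{-1}}_0$. At this point Lemma~\ref{lem_g0rN} does the heavy lifting: it asserts $g^{(s)^{-1}}_{0,i}\, r^{(s)}_{N_s-1,i} = \gamma^{s}/\mu^{s}$ for every $i$, which is equivalent to
\begin{equation*}
\diag(\mathbf{g}^{(s)^{-1}}_0)^{-1} = \gamma^{-s}\mu^{s}\, \diag(\mathbf{r}^{(s)}_{N_s-1}).
\end{equation*}
Left-multiplying the first identity by this inverse then yields the second.

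There is no real obstacle once Lemma~\ref{lem_g0rN} is in hand; the only care required is to interpret diagonal inversion componentwise in $\mathbb{GF}(q)$ and to remember that characteristic-2 arithmetic eliminates any sign bookkeeping that would otherwise arise when inverting $\mathbf{G}_2$.
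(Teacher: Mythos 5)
Your proof is correct, and it matches the argument the paper evidently intends: the paper states this lemma without proof, having set up Equation \eqref{eq:Gbar} and Lemma \ref{lem_g0rN} immediately beforehand precisely so that the first identity follows by associativity and transposition of the diagonal factor, and the second by inverting that diagonal factor entrywise via $g^{(s)^{-1}}_{0,i}\,r^{(s)}_{N_s-1,i}=\gamma^{s}/\mu^{s}$. No gaps; your remark that the entries of $\mathbf{g}^{(s)^{-1}}_0$ are nonzero (hence the diagonal matrix is invertible) is the only point needing justification, and it is immediate from Lemma \ref{lem_g0rN} itself since each product equals the nonzero element $\gamma^{s}/\mu^{s}$.
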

Now that we are equipped with the above definitions and lemmas, in the following, we introduce different non-binary special nodes and their associated decoding algorithms. The proposed special nodes can significantly reduce the latency as well as the complexity of the LNBSC decoder. 
\subsection{Rate-0 Node}
A node $(\nu,s)$ is identified as a Rate-0 node if $\mathcal{A}_{(\nu,s)}=\{\}$. The output vector of a Rate-0 node is $\boldsymbol{c}^{(\nu,s)^T}=\mathbf{0}^T_{N_s\times 1}{\mathbf{G}}^{\bigotimes s}_2$, where $\mathbf{0}_{N_s\times 1}$ denotes an all-zero vector of size $N_s$. Hence, the estimated output vector is $\boldsymbol{\chi}^{(\nu,s)}=\mathbf{0}_{N_s\times 1}$. 

It is noted that if the left child of a node is a Rate-0 node, then the CN equation in \eqref{NBcnSPA} (or \eqref{NBcn}) can be skipped and the VN operation can be executed directly. 
\subsection{Rate-1 Node}
This node corresponds to $\mathcal{A}^c_{(\nu,s)}=\{\}$. Suppose the LLR matrix at the top of a Rate-1 node is $\mathbf{L}^{(\nu,s)}$. Then, the non-binary output vector is estimated as
\begin{equation}
\boldsymbol{\chi}^{(\nu,s)}=\mathcal{H}(\mathbf{L}^{(\nu,s)}),
\end{equation}
where $\mathcal{H}(\cdot)$ (defined in \eqref{NB_harddec}) represents the non-binary hard-decision on different columns of $\mathbf{L}^{(\nu,s)}$.
\subsection{M-REP Node}\label{sec_mrep}
%\begin{lemma}
A node $(\nu,s)$ is identified as a multiplicative repetition (M-REP) node if $\mathcal{A}_{(\nu,s)}=\{N_s-1\}$.
%\end{lemma}
%\begin{proof}
Suppose $\mathbf{r}^{(s)^T}_{N_s-1}=[{r}^{(s)}_{N_s-1,0},\dots,{r}^{(s)}_{N_s-1,N_s-1}]$ be the last row of the generator matrix ${\mathbf{G}}^{\bigotimes s}_2$. The output vector of an M-REP node is obtained as 
\begin{equation}
\mathbf{c}^{(\nu,s)}=u_{N_s(\nu+1)-1}\mathbf{r}^{(s)}_{N_s-1},
\label{mRepeq}
\end{equation}
where ${u}_{N_s(\nu+1)-1}$ is the symbol located at leaf-node position ${N_s(\nu+1)-1}$. It is observed that the output vector consists of symbol ${u}_{N_s(\nu+1)-1}$ multiplied by different elements of $\mathbf{r}^{(s)}_{N_s-1}$, hence the term ``multiplicative repetition''.
%\end{proof}
\begin{lemma}\label{lem_mrep2rep}
If a node $(\nu,s)$ is identified as an M-REP node, the vector $\tilde{\mathbf{c}}^{(\nu,s)}$ defined as 
\begin{equation}
\tilde{\mathbf{c}}^{(\nu,s)}=\diag(\mathbf{g}^{(s)^{-1}}_0)\mathbf{c}^{(\nu,s)},
\end{equation}
is a codeword of a repetition (REP) code with rate $1/N_s$.
\end{lemma}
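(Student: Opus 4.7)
The plan is to chain two facts that are already in place: equation \eqref{mRepeq}, which gives a clean closed form for the codeword at an M-REP node, and Lemma \ref{lem_g0rN}, which tells us exactly what happens when we hit the last row of ${\mathbf{G}}^{\bigotimes s}_2$ with the diagonal of $\mathbf{g}^{(s)^{-1}}_{0}$. Together these should collapse $\tilde{\mathbf{c}}^{(\nu,s)}$ to a scalar multiple of the all-one vector, which is by definition a codeword of the length-$N_s$ repetition code over $\mathbb{GF}(q)$.

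Concretely, I would first note that an M-REP node has $\mathcal{A}_{(\nu,s)} = \{N_s-1\}$, so every $u_{N_s \nu + i}$ is frozen (i.e., equal to $0$) except for the single information symbol $u_{N_s(\nu+1)-1}$. Plugging into \eqref{eq_subcodwrd} and using the fact that in the product $\mathbf{u}_{N_s\nu}^{N_s(\nu+1)-1^T} {\mathbf{G}}^{\bigotimes s}_2$ only the last row of ${\mathbf{G}}^{\bigotimes s}_2$ survives, one recovers exactly \eqref{mRepeq}, namely
\begin{equation*}
\mathbf{c}^{(\nu,s)} \;=\; u_{N_s(\nu+1)-1}\,\mathbf{r}^{(s)}_{N_s-1}.
\end{equation*}
Then I would left-multiply by $\diag(\mathbf{g}^{(s)^{-1}}_{0})$ and pull the scalar $u_{N_s(\nu+1)-1}$ out:
\begin{equation*}
\tilde{\mathbf{c}}^{(\nu,s)} \;=\; u_{N_s(\nu+1)-1}\,\diag(\mathbf{g}^{(s)^{-1}}_{0})\,\mathbf{r}^{(s)}_{N_s-1}.
\end{equation*}
Applying Lemma \ref{lem_g0rN} replaces the right-hand factor with $(\gamma^s/\mu^s)\,\mathbf{1}_{N_s\times 1}$, yielding
\begin{equation*}
\tilde{\mathbf{c}}^{(\nu,s)} \;=\; \frac{\gamma^s}{\mu^s}\,u_{N_s(\nu+1)-1}\,\mathbf{1}_{N_s\times 1}.
\end{equation*}

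As $u_{N_s(\nu+1)-1}$ ranges over $\mathbb{GF}(q)$ and $\gamma^s/\mu^s \in \mathbb{GF}(q)\backslash 0$ is a fixed nonzero scalar, the vector $\tilde{\mathbf{c}}^{(\nu,s)}$ ranges over all constant-valued vectors of length $N_s$ over $\mathbb{GF}(q)$, i.e., the repetition code of rate $1/N_s$. This establishes the claim. I do not foresee a genuine obstacle here: the only non-trivial ingredient is Lemma \ref{lem_g0rN}, which is already assumed, so the proof is essentially a one-line substitution. If any care is needed, it is in verifying that the frozen symbols are indeed forced to $0$ so that the matrix-vector product really does isolate $\mathbf{r}^{(s)}_{N_s-1}$, which follows from the NBPC encoder definition in Section \ref{sec_nbpc}.
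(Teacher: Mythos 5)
Your proof is correct and follows essentially the same route as the paper's: substitute the M-REP closed form \eqref{mRepeq} into the definition of $\tilde{\mathbf{c}}^{(\nu,s)}$ and invoke Lemma \ref{lem_g0rN} to collapse $\diag(\mathbf{g}^{(s)^{-1}}_0)\mathbf{r}^{(s)}_{N_s-1}$ to $\frac{\gamma^s}{\mu^s}\mathbf{1}_{N_s\times 1}$. The only difference is that you spell out the derivation of \eqref{mRepeq} from the frozen-symbol structure, which the paper takes as given.
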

\begin{proof}
Based on Lemma \ref{lem_g0rN}, multiplying Equation \eqref{mRepeq} from left by $\diag(\mathbf{g}^{(s)^{-1}}_0)$ results in
\begin{equation}\label{eq_cwrd_rep}
\tilde{\mathbf{c}}^{(\nu,s)}=\frac{\gamma^s}{\mu^s}{u}_{N_s(\nu+1)-1}\mathbf{1}_{N_s\times1}.
\end{equation}
It is observed that  $\tilde{\mathbf{c}}^{(\nu,s)}$ consists of $N_s$ repetitions of symbol $\frac{\gamma^s}{\mu^s}{u}_{N_s(\nu+1)-1}$ which makes it the codeword of a repetition code with rate $1/N_s$. 
\end{proof}
The result of Lemma \ref{lem_mrep2rep} can be used to decode an M-REP node. First, different columns of the LLR matrix $\mathbf{L}^{(\nu,s)}$ are permuted as
 \begin{equation}\label{eq_permutLLR}
 \tilde{\boldsymbol{\ell}}_i^{(\nu,s)}=\boldsymbol{\Pi}^\times_{g_{0,i}^{(s)^{-1}}}\boldsymbol{\ell}_i^{(\nu,s)},\: \:0\leq i\leq N_s-1,
 \end{equation}
where $\boldsymbol{\Pi}^\times_{g_{0,i}^{(s)^{-1}}}$ is a $q\times q$ permutation matrix corresponding to $i$th element of $\mathbf{g}^{(s)^{-1}}_0$. Then, a hard decision is made on the element-wise summation of the vectors $\tilde{\boldsymbol{\ell}}_i^{(\nu,s)}$ as
\begin{equation}\label{eq_hd_rep}
{\chi}'=\mathcal{H}(\sum_{i=0}^{N_s-1}\tilde{\boldsymbol{\ell}}_i^{(\nu,s)}),
\end{equation}
where $\chi'$ is an estimate of the symbol $\frac{\gamma^s}{\mu^s}{u}_{N_s(\nu+1)-1}$ in \eqref{eq_cwrd_rep}. The estimated vector $\boldsymbol{\chi}^{(\nu,s)}$ corresponding to the M-REP node output is then obtained as
\begin{equation}
\boldsymbol{\chi}^{(\nu,s)}=\chi' \gamma^{-s}\mu^s\mathbf{r}^{(s)}_{N_s-1}.\label{eq:mrepoutput}
\end{equation}
Note that $\gamma^{-s}$ is the inverse of $\gamma^s$ with respect to $\mathbb{GF}(q)$.
\subsection{M-SPC Node}\label{sec_mspc}
The multiplicative single parity check (M-SPC) node corresponds to $\mathcal{A}^c_{(\nu,s)}=\{0\}$.
\begin{theorem}\label{theo_mSPC}
Let $\mathbf{c}^{(\nu,s)}$ be the output of the node $(\nu,s)$ with  $\mathcal{A}^c_{(\nu,s)}=\{0\}$. Then, the elements of the vector $\tilde{\mathbf{c}}^{(\nu,s)}=\diag(\mathbf{g}^{(s)^{-1}}_0)\mathbf{c}^{(\nu,s)}$ form a non-binary SPC equation or, equivalently, the node $(\nu,s)$ is identified as a multiplicative SPC (M-SPC) node. 
\end{theorem}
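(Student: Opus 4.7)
The plan is to express the sum of all entries of $\tilde{\mathbf{c}}^{(\nu,s)}$ and show that it vanishes in $\mathbb{GF}(q)$, which is exactly the non-binary SPC parity equation. Starting from the expansion $\tilde{\mathbf{c}}^{(\nu,s)^T}=\mathbf{u}_{N_s\nu}^{{N_s(\nu+1)-1}^T}\overline{\mathbf{G}}_2^{\bigotimes s}$, I will swap the order of summation so that the total $\sum_{i=0}^{N_s-1}\tilde{c}^{(\nu,s)}_i$ becomes $\sum_{j=0}^{N_s-1} u_{N_s\nu+j}\,\sigma_j$, where $\sigma_j$ denotes the row-sum of row $j$ of $\overline{\mathbf{G}}_2^{\bigotimes s}$. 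The whole argument then reduces to evaluating each $\sigma_j$.

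For $j\ge 1$, Lemma \ref{lem1} guarantees that every nonzero entry in row $j$ is the same element $\mathsf{f}^{(s)}_{0,j}$, and Lemma \ref{lem_evennumber} says these nonzero entries occur an even number of times. Because $q=2^p$ has characteristic two, adding the same field element an even number of times yields $0$, so $\sigma_j=0$ for all $j\ge 1$. For $j=0$, I use the fact that $\mathbf{G}_2^{\bigotimes s}$ is lower triangular with top-left entry $\mu^s$, so row $0$ of $\overline{\mathbf{G}}_2^{\bigotimes s}=\mathbf{G}_2^{\bigotimes s}\diag(\mathbf{g}_0^{(s)^{-1}})$ has a single nonzero entry equal to $\mu^s\cdot\mu^{-s}=1$ at position $0$, giving $\sigma_0=1$.

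Putting the pieces together, $\sum_{i=0}^{N_s-1}\tilde{c}^{(\nu,s)}_i = u_{N_s\nu}$. Since $\mathcal{A}^c_{(\nu,s)}=\{0\}$ forces $u_{N_s\nu}=0$, we conclude $\sum_{i=0}^{N_s-1}\tilde{c}^{(\nu,s)}_i=0$, which is precisely a non-binary single parity check equation on the symbols of $\tilde{\mathbf{c}}^{(\nu,s)}$.

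I do not anticipate a serious obstacle: the two structural lemmas already provided do most of the heavy lifting, and the crucial step is simply recognizing that characteristic two turns the ``even number of identical nonzero entries'' statement of Lemma \ref{lem_evennumber} into a vanishing row sum. The only subtlety worth stating cleanly is the boundary case $j=0$, where the lemma explicitly excludes the top row and one must verify separately that it contributes exactly the frozen symbol $u_{N_s\nu}$.
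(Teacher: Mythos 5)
Your proof is correct and follows essentially the same route as the paper's: both arguments reduce the parity sum to row sums of $\overline{\mathbf{G}}^{\bigotimes s}_2$, invoke Lemma \ref{lem1} (equal nonzero entries per row) together with Lemma \ref{lem_evennumber} (even count) to get cancellation in characteristic two, and dispose of row $0$ via the frozen symbol $u_{N_s\nu}=0$. Your explicit evaluation $\sigma_0=1$ is a slightly cleaner treatment of the boundary case the paper handles by simply noting that the first row ``does not contribute,'' but the substance is identical.
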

\begin{proof}
 Multiplying \eqref{eq_subcodwrd} by $\diag(\mathbf{g}^{(s)^{-1}}_0)$ from right results in
\begin{equation}
\tilde{\mathbf{c}}^{(\nu,s)^T} = \mathbf{u}_{N_s\nu}^{{N_s(\nu+1)-1}^T}\overline{\mathbf{G}}^{\bigotimes s}_2,
\end{equation}
where $\overline{\mathbf{G}}^{\bigotimes s}_2$ is defined in Lemma \ref{lem1}. Since $\mathcal{A}^c_{(\nu,s)}=\{0\}$, the input vector $\mathbf{u}_{N_s\nu}^{{N_s(\nu+1)-1}} = [0, u_{N_s\nu+1},\dots,u_{N_s(\nu+1)-1}]^T$, meaning that the first row of $\overline{\mathbf{G}}^{\bigotimes s}_2$ would not contribute to $\tilde{\mathbf{c}}^{(\nu,s)}=[\tilde{{c}}^{(\nu,s)}_0,\dots,\tilde{{c}}^{(\nu,s)}_{N_s-1}]^T$. Also, according to Lemma \ref{lem1}, all the nonzero elements of row $j$, $0 \leq j \leq N_s-1$, in $\overline{\mathbf{G}}^{\bigotimes s}_2$ are equal. On the other hand, Lemma \ref{lem_evennumber} implies that the number of nonzero elements at each row of $\overline{\mathbf{G}}^{\bigotimes s}_2$ (excluding the first row) is even. Therefore, in the summation of elements in $\tilde{\mathbf{c}}^{(\nu,s)}$, different terms would cancel each other out and we have
\begin{equation}
\tilde{{c}}^{(\nu,s)}_0+\dots+\tilde{{c}}^{(\nu,s)}_{N_s-1}=0.
\end{equation}
\end{proof}
The result of Theorem \ref{theo_mSPC} can be used to decode an M-SPC node. In the first step, Equation \eqref{eq_permutLLR} is used to obtain the LLR matrix $\tilde{\mathbf{L}}^{(\nu,s)}$ from ${\mathbf{L}}^{(\nu,s)}$. Then, the proposed non-binary SPC decoder presented in Algorithm \ref{alg_nbSPC2} (it will be explained shortly) is used to compute the estimated vector $\tilde{\boldsymbol{\chi}}^{(\nu,s)}$ corresponding to $\tilde{\mathbf{c}}^{(\nu,s)}$. Finally, based on Lemma \ref{lem:c_ctilde}, the estimated output vector ${\boldsymbol{\chi}}^{(\nu,s)}$ is obtained as
\begin{equation}
{\boldsymbol{\chi}}^{(\nu,s)}=\gamma^{-s}\mu^s\diag(\mathbf{r}^{(s)}_{N_s-1})\tilde{\boldsymbol{\chi}}^{(\nu,s)},
\end{equation}
where $\diag(\mathbf{r}^{(s)}_{N_s-1})$ is a $2^s \times 2^s$ diagonal matrix whose diagonal entries are the $2^s$ elements in the last row of $\mathbf{G}^{\bigotimes s}_2$.
%\begin{remark}
The SPC nodes in the binary polar codes can be decoded using the Wagner decoding algorithm, \cite{hanif}, which is a maximum-likelihood (ML) decoder. A similar algorithm to Wagner can also be formulated for non-binary SPC nodes, but it is too complex. To avoid such complexity, we propose a low-complexity decoder for non-binary SPC nodes in Algorithm \ref{alg:nbSPC2} which has a remarkable error rate performance. This is while Algorithm \ref{alg:nbSPC2}, unlike Wagner decoder, is not an ML decoding algorithm.

According to Algorithm \ref{alg:nbSPC2}, first a hard decision is made on the input non-binary LLRs (line $3$). If the parity check equation is satisfied, the result of the hard decision, which is stored in vector $\boldsymbol{\chi}$, is returned as the output. In the case of nonzero parity, two different scenarios are considered. In the first scenario, it is assumed only one symbol can be erroneous and each of the elements in $\boldsymbol{\chi}$ are replaced by a symbol-candidate that satisfies the parity check equation (line $6$), and the corresponding symbol-LLR is stored (line $7$). Among the stored LLRs, the index of the minimum one, denoted by $\kappa$, is obtained (line $10$). %and the corresponding symbol in $\boldsymbol{\chi}$ is replaced by the symbol-candidate at position $\kappa$ (line $9$). The updated vector $\boldsymbol{\chi}$ is, finally, returned as the output of the decoder.

\begin{algorithm}[H]
\caption{Improved non-binary SPC decoding.}\label{alg:nbSPC2}
\begin{algorithmic}[1]
\STATE \colorbox{gray(x11gray)}{\textsc{Input:} matrix $\mathbf{L}=[\boldsymbol\ell_0,\dots,\boldsymbol\ell_{N_s-1}]$}
\STATE \colorbox{gray(x11gray)}{{\textsc{Output:}} vector $\boldsymbol{\chi}=[\chi_0,\dots,\chi_{N_s-1}]^T$}

\STATE \colorbox{gray(x11gray)}{$\boldsymbol{\chi} \gets \mathcal{H}(\mathbf{L})$}
\STATE \colorbox{gray(x11gray)}{\textbf{if} $\chi_0 + \dots + \chi_{N_s-1} \neq 0$}
%\STATE \hspace{0.5cm}$ \textbf{return }\boldsymbol{\chi}$
%\STATE \textbf{else}
%\STATE \hspace{0.5cm} ${\boldsymbol{\psi} = [\psi_0,\dots,\psi_{N_s-1}]^T} \gets \mathbf{0}_{N_s\times 1}$
\STATE \hspace{0.5cm} \colorbox{gray(x11gray)}{$\textbf{for } i=\{0,\dots,N_s-1\}$}
\STATE \hspace{1cm} \colorbox{gray(x11gray)}{${\psi}_i \gets\sum_{j=\{0,\dots,N_s-1\}\backslash i}\chi_j$}
\STATE \hspace{1cm} \colorbox{gray(x11gray)}{$w_i\gets \ell_{i,{\psi}_i}$}
\STATE \hspace{1cm} $\psi'_i\gets \underset{j\in \mathbb{GF}(q)\backslash \chi_i}{\argmin}({\ell}_{i,j})$
\STATE \hspace{1cm} $w'_i\gets {\ell}_{i,\psi'_i}$
%\STATE \hspace{1cm} $h_i\gets \underset{j\in \mathbb{GF}(q)\backslash \chi_i}{\min}({\ell}_{i,j})$
\STATE \hspace{0.5cm} \colorbox{gray(x11gray)}{$\kappa\gets\underset{m=\{0,\dots,N_s-1\}}{\argmin}(w_m)$}
\STATE \hspace{0.5cm} $\kappa'\gets\underset{m=\{0,\dots,N_s-1\}}{\argmin}(w'_m)$
\STATE \hspace{0.5cm} $\textbf{for } i=\{0,\dots,N_s-1\}\backslash \kappa'$
\STATE \hspace{1cm} $\bar{\psi}_i \gets\sum_{j=\{0,\dots,N_s-1\}\backslash i,\kappa'}\chi_j+\psi'_{\kappa'}$
\STATE \hspace{1cm} $\bar{w}_i\gets {\ell}_{i,\bar{\psi}_i}$
\STATE \hspace{0.5cm} $\bar{\kappa}\gets\underset{m=\{0,\dots,N_s-1\}}{\argmin}(\bar{w}_m)$
\STATE \hspace{0.5cm} \textbf{if} $\bar{w}_{\bar{\kappa}} + w'_{\kappa'}>w_\kappa$
\STATE \hspace{1cm} \colorbox{gray(x11gray)}{$\chi_\kappa\gets\psi_\kappa$}
\STATE \hspace{0.5cm} \textbf{else}
\STATE \hspace{1cm} $\chi_{\kappa'}\gets\psi'_{\kappa'}$, $\chi_{\bar{\kappa}}\gets\bar{\psi}_{\bar{\kappa}}$
\STATE $ \textbf{return }\boldsymbol{\chi}$
\end{algorithmic}
\label{alg_nbSPC2}
\end{algorithm}
%While Algorithm \ref{alg:nbSPC} has low complexity, our simulation results show that using it for the decoding of M-SPC nodes in the fast-LNBSC decoder might result in $0.1-0.3$dB loss in error rate compared to LNBSC decoder without special nodes. To reduce this loss, we propose Algorithm \ref{alg:nbSPC2} which, according to the simulation results, has a remarkable error rate performance and its loss is almost $0$.  
In the second scenario, in addition to the single error cases in the first scenario, a special case of $2$ symbol errors is also considered. In lines $8-9$ of Algorithm \ref{alg:nbSPC2}, the second minimum\footnote{It is noted that the first minimum of an LLR vector corresponds to the hard decision.} of each LLR vector along with its corresponding symbol is stored. Then, at line $11$, the index of the symbol that has the smallest 2nd-minimum LLR is found and stored in variable $\kappa'$. At this stage, it is assumed that the symbol at position $\kappa'$ of $\boldsymbol{\chi}$ is equal to the symbol associated with the second minimum. Then, each of the elements in $\boldsymbol{\chi}$, except the one at position $\kappa'$, are one-by-one replaced by a symbol-candidate that satisfies the parity check equation (line $13$), and the corresponding symbol-LLR is stored (line $14$). Among the stored LLRs, the index of minimum one, denoted by $\bar{\kappa}$, is obtained (line $15$). In the last step, if the minimum LLR obtained for single-error case (line $10$) is less than the summation of minimum LLR obtained at line $15$ and the 2nd-minimum LLR corresponding to position $\kappa'$, then the symbol at position $\kappa$ in $\boldsymbol{\chi}$ is replaced by the corresponding symbol-candidate from line $6$. Otherwise, two symbols of $\boldsymbol{\chi}$ at positions $\kappa'$ and $\bar{\kappa}$ are, simultaneously, replaced by the corresponding symbols found in lines $8$ and $13$. We note that one could only track the shaded gray lines (single error scenario) in Algorithm \ref{alg:nbSPC2} to further reduce complexity, but our investigations showed that this is not good enough. In the following, we present an example to clarify the steps taken in Algorithms \ref{alg:nbSPC2}.

\begin{example}
Suppose $f(x)=x^2+x+1$ is adopted as the primitive polynomial of $\mathbb{GF}(4)$. Consider the input to the non-binary SPC decoders in Algorithm \ref{alg:nbSPC2} is the matrix
\begin{equation*}
\boldsymbol{L}=\left[ \begin{array}{cccr}
0& 5 & 17& 0\\ 
12 & 10 & 0 & 8\\
34& 0 & 14 & 25 \\
6 & 63 & 16 & 33
\end{array}\right].
\end{equation*}
Using the decimal representation of the non-binary field elements, the hard-decision output at line $2$ of the algorithm is $\boldsymbol{\chi} = [0,2,1,0]^T$. Since the parity equation is not satisfied ($0+2+1+0\neq 0$), Algorithm \ref{alg:nbSPC2} in the first scenario considers $4$ candidate symbols, corresponding to $4$ positions, which satisfy the parity equation, i.e., symbols $\psi_0 = 3, \psi_1=1,\psi_2=2$ and $\psi_3=3$ ( $\boldsymbol{\psi}=[3,1,2,3]^T$), and stores their corresponding LLRs, i.e., $\boldsymbol{\omega}=[6,10,14,33]^T$. Because $w_0$ is the minimum LLR in $\boldsymbol{\omega}$, $\psi_0=3$ is selected to substitute $\chi_0$.% that results in $\boldsymbol{\chi}=[3,2,1,0]^T$.

Based on Algorithm \ref{alg:nbSPC2}, in the second scenario, symbols corresponding to second minimum-LLRs are also considered, i.e., $\boldsymbol{\psi}'=[3,0,2,1]^T$ with LLRs $\boldsymbol{\omega}'=[6,5,14,8]^T$. Since $w'_1=5$ has the minimum value among the elements of $\boldsymbol{\omega}'$, it is assumed that $\chi_1=\psi'_1=0$. Now, with the exception of position $1$, three candidate symbols for the three remaining positions are obtained, i.e., $\bar{\boldsymbol{\psi}}=[1,?,0,1]^T$ with LLRs $\bar{\boldsymbol{\omega}}=[12,?,17,8]^T$. $\bar{w}_3=8$ is the minimum LLR in $\bar{\boldsymbol{\omega}}$, so, $\bar{\psi}_3=1$ is considered as a candidate to replace $\chi_3$. At this stage, we have two options for output vector $\boldsymbol{\chi}$. In the first one, only $\chi_0$ is replaced by $\psi_0=3$ with LLR $w_0=6$. In the second option, $\chi_1$ and $\chi_3$ are replaced by $\psi_1'=0$ (with LLR $w'_1=5$) and $\bar{\psi}_3=1$ (with $\bar{w}_3=8$), respectively. Since $(w'_1+\bar{w}_3=13) > (w_0=8)$, due to the lower penalty, the first option is chosen (line 17), i.e., $\boldsymbol{\chi} = [\bold{3},2,1,0]^T.$ 
\end{example}
\subsection{Type-I Node}
A node $(\nu,s)$ is identified as a Type-I node if $\mathcal{A}_{(\nu,s)}=\{N_s-2,N_s-1\}$.

The following result is simple to prove.
\begin{theorem}
Let $\mathbf{c}^{(\nu,s)}_e$ and $\mathbf{c}^{(\nu,s)}_o$ be the symbols located at even and odd indexes of the output codeword of a Type-I node. Then, the elements of $\mathbf{c}^{(\nu,s)}_e$ and $\mathbf{c}^{(\nu,s)}_o$ form two separate M-REP code as
\begin{equation}
\begin{split}
\mathbf{c}^{(\nu,s)}_e &= (\mu u_{N_s(\nu+1)-2}+\gamma u_{N_s(\nu+1)-1})\mathbf{r}^{(s-1)}_{N_s/2-1},\\
\mathbf{c}^{(\nu,s)}_o &= \delta u_{N_s(\nu+1)-1}\mathbf{r}^{(s-1)}_{N_s/2-1},
\end{split}
\end{equation}
where $\mathbf{r}^{(s-1)}_{N_s/2-1}$ is the last row of the generator matrix $\mathbf{G}^{\bigotimes s-1}_2$.
\end{theorem}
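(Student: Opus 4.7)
The plan is to read off the even- and odd-indexed entries of the output codeword directly from the Kronecker structure of $\mathbf{G}_2^{\bigotimes s}$. Because $\mathcal{A}_{(\nu,s)}=\{N_s-2,N_s-1\}$, every component of the input vector $\mathbf{u}_{N_s\nu}^{N_s(\nu+1)-1}$ is frozen to zero except for $u_{N_s(\nu+1)-2}$ and $u_{N_s(\nu+1)-1}$. Equation \eqref{eq_subcodwrd} therefore collapses to
\begin{equation*}
\mathbf{c}^{(\nu,s)^T}
= u_{N_s(\nu+1)-2}\,\mathbf{r}^{(s)^T}_{N_s-2}
+ u_{N_s(\nu+1)-1}\,\mathbf{r}^{(s)^T}_{N_s-1},
\end{equation*}
where $\mathbf{r}^{(s)^T}_i$ denotes the $i$-th row of $\mathbf{G}_2^{\bigotimes s}$. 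It thus suffices to describe the even- and odd-indexed entries of these two rows.

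For this I would expand via the recursion $\mathbf{G}_2^{\bigotimes s}=\mathbf{G}_2^{\bigotimes s-1}\otimes \mathbf{G}_2$. Under this factoring, each column index $j\in\{0,\dots,N_s-1\}$ decomposes uniquely as $j=2j_A+j_B$ with $j_A\in\{0,\dots,N_s/2-1\}$ and $j_B\in\{0,1\}$, so that even columns correspond to $j_B=0$ and odd columns to $j_B=1$. Analogously each row index decomposes as $i=2i_A+i_B$; crucially, the two rows of interest share $i_A=N_s/2-1$ and differ only in $i_B\in\{0,1\}$. The Kronecker rule
\begin{equation*}
(\mathbf{G}_2^{\bigotimes s})_{i,j}=(\mathbf{G}_2^{\bigotimes s-1})_{i_A,j_A}\cdot (\mathbf{G}_2)_{i_B,j_B}
\end{equation*}
then immediately yields, for $i_A=N_s/2-1$ and every $j_A$:
even entries of row $N_s-2$ equal $\mu\,r^{(s-1)}_{N_s/2-1,j_A}$ and its odd entries vanish (because $(\mathbf{G}_2)_{0,1}=0$); even entries of row $N_s-1$ equal $\gamma\,r^{(s-1)}_{N_s/2-1,j_A}$ while its odd entries equal $\delta\,r^{(s-1)}_{N_s/2-1,j_A}$.

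Plugging these four blocks back into the displayed linear combination, grouping by $u_{N_s(\nu+1)-2}$ and $u_{N_s(\nu+1)-1}$, and recognizing $\mathbf{r}^{(s-1)}_{N_s/2-1}$ as a common factor immediately gives the two claimed identities. Every remaining step is routine bookkeeping. The only place where care is required—and the one genuine obstacle—is the choice of Kronecker factoring: the ``opposite'' recursion $\mathbf{G}_2\otimes \mathbf{G}_2^{\bigotimes s-1}$ would partition column indices into a first/second half rather than into even/odd, which would make the desired even/odd split invisible. With the factoring $\mathbf{G}_2^{\bigotimes s-1}\otimes \mathbf{G}_2$ selected up front, as in the plan above, the proof is a two- or three-line calculation, consistent with the paper's assertion that the result is simple to prove.
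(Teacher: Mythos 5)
Your proof is correct, and it verifies cleanly against the paper's explicit example of $\mathbf{G}_2^{\bigotimes 2}$ (rows $2$ and $3$ there have exactly the even/odd structure you derive). The paper offers no proof of this theorem (it is stated as ``simple to prove''), so there is nothing to diverge from; for what it is worth, the paper's proofs of the sibling results (Type-II, Type-III) proceed by induction on $s$ via the half-concatenation recursion $\mathbf{c}^{(\nu,s)^T}=[\mu\mathbf{c}^{(2\nu,s-1)^T}+\gamma\mathbf{c}^{(2\nu+1,s-1)^T},\,\delta\mathbf{c}^{(2\nu+1,s-1)^T}]$, whereas your direct index-decomposition $(\mathbf{G}_2^{\bigotimes s})_{i,j}=(\mathbf{G}_2^{\bigotimes s-1})_{i_A,j_A}(\mathbf{G}_2)_{i_B,j_B}$ gets the even/odd split in one step without induction. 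Your remark about choosing the factoring $\mathbf{G}_2^{\bigotimes s-1}\otimes\mathbf{G}_2$ rather than $\mathbf{G}_2\otimes\mathbf{G}_2^{\bigotimes s-1}$ is well taken, though since the Kronecker power is associative both factorings yield the same matrix, so this is a matter of which bookkeeping makes the claim visible rather than a genuine obstacle.
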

In order to decode a Type-I node, the matrix $\mathbf{L}^{(\nu,s)}$ is divided into two submatrices,  $\mathbf{L}^{(\nu,s)}_e$ and $\mathbf{L}^{(\nu,s)}_o$, that consist of the even and odd indexed columns of $\mathbf{L}^{(\nu,s)}$, respectively. Then, $\mathbf{L}^{(\nu,s)}_e$ and $\mathbf{L}^{(\nu,s)}_o$ along with the M-REP decoding of Section \ref{sec_mrep} are used to decode two separate M-REP codes with length $N_s/2$. Eventually, the estimated output vectors, denoted by $\boldsymbol{\chi}^{(\nu,s)}_e$ and $\boldsymbol{\chi}^{(\nu,s)}_o$, are placed at the corresponding even and odd positions of ${\boldsymbol{\chi}}^{(\nu,s)}$ to obtain the total output vector.
\subsection{Type-II Node}
A node $(\nu,s)$ is identified as a Type-II node if $\mathcal{A}_{(\nu,s)}=\{N_s-3,N_s-2,N_s-1\}$.
\begin{theorem}\label{theo_T2}
If a node $(\nu,s)$ is identified as a Type-II node, its output vector is
\begin{equation}
\mathbf{c}^{(\nu,s)^T}=[{r}^{(s-2)}_{N_s/4-1,0}\mathbf{c}^{(2)^T},\dots,{r}^{(s-2)}_{N_s/4-1,N_s/4-1}\mathbf{c}^{(2)^T}],
\label{eq_T2}
\end{equation} 
where $\mathbf{r}^{(s-2)}_{N_s/4-1}=[{r}^{(s-2)}_{N_s/4-1,0},\dots,{r}^{(s-2)}_{N_s/4-1,N_s/4-1}]^T$ is the last row of $\mathbf{G}^{\bigotimes s-2}_2$ and $\mathbf{c}^{(2)}=[{c}^{(2)}_0,{c}^{(2)}_1,{c}^{(2)}_2,{c}^{(2)}_3]^T$ is the output vector of an M-SPC node with length $4$, i.e., $g^{(2)^{-1}}_{0,0}{c}^{(2)}_0+g^{(2)^{-1}}_{0,1}{c}^{(2)}_1+g^{(2)^{-1}}_{0,2}{c}^{(2)}_2+g^{(2)^{-1}}_{0,3}{c}^{(2)}_3=0$.
\end{theorem}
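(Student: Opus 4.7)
The plan is to exploit the Kronecker factorization $\mathbf{G}^{\bigotimes s}_2 = \mathbf{G}^{\bigotimes (s-2)}_2 \otimes \mathbf{G}^{\bigotimes 2}_2$ to peel off the last two polarization stages and recognize the inner length-$4$ code as an M-SPC node. First, I would start from Equation \eqref{eq_subcodwrd}, $\mathbf{c}^{(\nu,s)^T} = \mathbf{u}_{N_s\nu}^{{N_s(\nu+1)-1}^T}\mathbf{G}^{\bigotimes s}_2$, and drop every row of $\mathbf{G}^{\bigotimes s}_2$ whose corresponding input is frozen. By the Type-II hypothesis $\mathcal{A}_{(\nu,s)} = \{N_s-3, N_s-2, N_s-1\}$, only the three rows with indices $N_s-3$, $N_s-2$, $N_s-1$ survive.

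Next, I would re-index those rows via the Kronecker factorization: the row at position $4i_1 + i_2$ of $\mathbf{G}^{\bigotimes s}_2$, with $0 \le i_1 < N_s/4$ and $0 \le i_2 < 4$, equals $\mathbf{r}^{(s-2)}_{i_1} \otimes \mathbf{r}^{(2)}_{i_2}$, where $\mathbf{r}^{(k)}_j$ denotes the $j$th row of $\mathbf{G}^{\bigotimes k}_2$. Since $N_s-(4-j) = 4(N_s/4-1) + j$ for $j \in \{1,2,3\}$, the three surviving rows share the common left factor $\mathbf{r}^{(s-2)}_{N_s/4-1}$ while their right factors run through $\mathbf{r}^{(2)}_1,\mathbf{r}^{(2)}_2,\mathbf{r}^{(2)}_3$. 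Factoring out the common left factor from the sum gives
\begin{equation*}
\mathbf{c}^{(\nu,s)^T} = \mathbf{r}^{(s-2)^T}_{N_s/4-1} \otimes \mathbf{c}^{(2)^T},
\end{equation*}
where $\mathbf{c}^{(2)^T} = [0,\, u_{N_s\nu+N_s-3},\, u_{N_s\nu+N_s-2},\, u_{N_s\nu+N_s-1}]\,\mathbf{G}^{\bigotimes 2}_2$. Expanding the Kronecker product of the two row vectors as a concatenation of scaled copies of $\mathbf{c}^{(2)^T}$, with scaling factors read off the entries of $\mathbf{r}^{(s-2)}_{N_s/4-1}$, recovers exactly the block layout displayed in Equation \eqref{eq_T2}.

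Finally, the inner codeword $\mathbf{c}^{(2)}$ is produced by a length-$4$ sub-encoder whose only frozen coordinate is $u_0$, i.e., its complement set is $\{0\}$. This is precisely the hypothesis of Theorem \ref{theo_mSPC}, which immediately supplies the non-binary SPC relation $g^{(2)^{-1}}_{0,0}c^{(2)}_0 + g^{(2)^{-1}}_{0,1}c^{(2)}_1 + g^{(2)^{-1}}_{0,2}c^{(2)}_2 + g^{(2)^{-1}}_{0,3}c^{(2)}_3 = 0$ asserted in the statement. I do not foresee any serious obstacle: the one step requiring genuine care is the mixed-radix row-indexing — verifying that $N_s-1, N_s-2, N_s-3$ all decompose as $4(N_s/4-1)+j$ with $j \in \{3,2,1\}$ — after which the identification of the inner length-$4$ code via Theorem \ref{theo_mSPC} closes out the argument.
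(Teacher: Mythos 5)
Your argument is correct, and it reaches the result by a genuinely different route from the paper. The paper proves the theorem by induction on $s$: for $s\geq 3$ a Type-II node has a Rate-0 left child and a Type-II (or, at $s-1=2$, M-SPC) right child, so the encoder recursion collapses to $\mathbf{c}^{(\nu,s)^T}=[\gamma\,\mathbf{c}^{(2\nu+1,s-1)^T},\,\delta\,\mathbf{c}^{(2\nu+1,s-1)^T}]$, and the paper then simply asserts that unrolling this recursion yields the non-recursive form \eqref{eq_T2}. You instead work directly from \eqref{eq_subcodwrd} and the factorization $\mathbf{G}^{\bigotimes s}_2=\mathbf{G}^{\bigotimes (s-2)}_2\otimes\mathbf{G}^{\bigotimes 2}_2$, observe that the three surviving rows $N_s-3,N_s-2,N_s-1$ decompose as $4(N_s/4-1)+j$, $j\in\{1,2,3\}$, hence share the common left Kronecker factor $\mathbf{r}^{(s-2)}_{N_s/4-1}$, and factor it out of the sum; the identification of the inner length-$4$ code with $\mathcal{A}^c=\{0\}$ then hands the parity relation to Theorem \ref{theo_mSPC} exactly as the paper does. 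The two arguments encode the same structural fact, but yours makes explicit the step the paper leaves implicit (why the unrolled recursion has precisely the block form \eqref{eq_T2}), at the cost of having to justify the mixed-radix row indexing and the Kronecker convention $\mathbf{G}^{\bigotimes s}_2=\mathbf{G}^{\bigotimes(s-2)}_2\otimes\mathbf{G}^{\bigotimes 2}_2$, which you correctly flag as the one step needing care and which does hold under the paper's convention (as the displayed $\mathbf{G}^{\bigotimes 2}_2$ confirms). The paper's induction, by contrast, generalizes more readily to the GM-REP family where the same recursion is reused.
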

\begin{proof}
The theorem can be proved using mathematical induction. For $s=2$, $\mathbf{c}^{(\nu,2)}=\mathbf{c}^{(2)}$ which is an M-SPC node. For $s\geq 3$, we have $\mathbf{c}^{(\nu,s)^T}=[\mathbf{0}_{2^{s-1}\times 1}^T
+\gamma \mathbf{c}^{(\nu,s-1)^T},\delta\mathbf{c}^{(\nu,s-1)^T}]$ whose non-recursive form can be written as \eqref{eq_T2}.
\end{proof}
In order to decode a Type-II node, the columns of the soft input matrix $\mathbf{L}^{(\nu,s)}$ are permuted as
\begin{equation}
\dot{\boldsymbol{\ell}}_i^{(\nu,s)}=\boldsymbol{\Pi}^{\times^{-1}}_{r_{N_s/4-1,\lfloor i/4 \rfloor}^{(s-2)}}\boldsymbol{\ell}_i^{(\nu,s)},\: \:0\leq i\leq N_s-1.
\end{equation}
Then, the columns of LLR matrix $\mathbf{L}^{(2)}=[\boldsymbol{\ell}^{(2)}_0,\dots,\boldsymbol{\ell}^{(2)}_3]$ are obtained as
\begin{equation}
\boldsymbol{\ell}^{(2)}_j=\sum_{k=0}^{N_s/4-1}\dot{\boldsymbol{\ell}}_{4k+j}^{(\nu,s)}, \: j=0,\dots,3.\label{eq:T2summationllr}
\end{equation} 
The calculated matrix $\mathbf{L}^{(2)}$ is then used as the input of an M-SPC decoder (explained in Section \ref{sec_mspc}) to obtain the estimated vector $\boldsymbol{\chi}^{(2)}=[{\chi}^{(2)}_0,\dots,{\chi}^{(2)}_3]^T$. The output vector, $\boldsymbol{\chi}^{(\nu,s)}$, is then computed as
\begin{equation}
\boldsymbol{\chi}^{(\nu,s)^T}=[{r}^{(s-2)}_{N_s/4-1,0}\boldsymbol{\chi}^{(2)^T},\dots,{r}^{(s-2)}_{N_s/4-1,N_s/4-1}\boldsymbol{\chi}^{(2)^T}].
\end{equation}  
\subsection{Type-III Node}
A node $(\nu,s)$, $s\geq 2$, is identified as a Type-III node if $\mathcal{A}_{(\nu,s)}^c=\{0,1\}$.
\begin{theorem}
Let $\mathbf{c}^{(\nu,s)}_e$ and $\mathbf{c}^{(\nu,s)}_o$ be the symbols located at even and odd indexes of the output vector of a Type-III node. Then, the elements of the vectors $\tilde{\mathbf{c}}^{(\nu,s)}_e=\diag(\mathbf{g}^{(s-1)^{-1}}_0)\mathbf{c}^{(\nu,s)}_e$ and $\tilde{\mathbf{c}}^{(\nu,s)}_o=\diag(\mathbf{g}^{(s-1)^{-1}}_0)
\mathbf{c}^{(\nu,s)}_o$ form two separate non-binary SPC equations with length $N_s/2$, i.e., 
\begin{equation}\label{eq_T3}
\sum_{i=0}^{N_s/2-1}{\tilde{{c}}^{(\nu,s)}_{e,i}}=
\sum_{i=0}^{N_s/2-1}{\tilde{{c}}^{(\nu,s)}_{o,i}}=0
\end{equation}
\end{theorem}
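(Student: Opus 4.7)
The plan is to decompose the Type-III node's output vector into its even- and odd-indexed subvectors by exploiting the Kronecker factorization of $\mathbf{G}_2^{\bigotimes s}$, exhibit each subvector as the codeword of a length-$N_s/2$ polar sub-transform whose leading input coordinate is forced to zero, and then invoke Theorem~\ref{theo_mSPC} independently on each of the two halves.

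The first step will be to use the identity $\mathbf{G}_2^{\bigotimes s} = \mathbf{G}_2^{\bigotimes s-1} \otimes \mathbf{G}_2$. Writing $G := \mathbf{G}_2^{\bigotimes s-1}$ with entries $G_{ij}$, the Kronecker product has a natural $2\times 2$-block form in which the $(i,j)$ block equals $G_{ij}\mathbf{G}_2$. Reading this block structure column by column, the even column $2j$ of $\mathbf{G}_2^{\bigotimes s}$ picks out $[\mu,\gamma]^T$ from each block $(i,j)$, contributing $\mu G_{ij}$ at row $2i$ and $\gamma G_{ij}$ at row $2i+1$, while the odd column $2j+1$ picks $[0,\delta]^T$, contributing $0$ at row $2i$ and $\delta G_{ij}$ at row $2i+1$. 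Substituting these entries into $c_c = \sum_r u_r[\mathbf{G}_2^{\bigotimes s}]_{r,c}$ and grouping each adjacent pair of rows then yields, for $0\le j\le N_s/2-1$,
\begin{equation*}
c_{2j} \;=\; \sum_{i=0}^{N_s/2-1}(\mu u_{2i}+\gamma u_{2i+1})\,G_{ij}, \qquad c_{2j+1} \;=\; \delta\sum_{i=0}^{N_s/2-1} u_{2i+1}\,G_{ij},
\end{equation*}
which in matrix form reads $\mathbf{c}^{(\nu,s)^T}_e = \mathbf{v}^{(e)^T}\mathbf{G}_2^{\bigotimes s-1}$ and $\mathbf{c}^{(\nu,s)^T}_o = \mathbf{v}^{(o)^T}\mathbf{G}_2^{\bigotimes s-1}$ with $v_i^{(e)} := \mu u_{2i}+\gamma u_{2i+1}$ and $v_i^{(o)} := \delta u_{2i+1}$ (using the local indexing inside the node).

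The Type-III hypothesis $\mathcal{A}^c_{(\nu,s)}=\{0,1\}$ forces the first two local inputs to be frozen zeros, whence $v_0^{(e)} = \mu\cdot 0+\gamma\cdot 0 = 0$ and $v_0^{(o)} = \delta\cdot 0 = 0$. Each of $\mathbf{c}^{(\nu,s)}_e$ and $\mathbf{c}^{(\nu,s)}_o$ is therefore the output of a length-$N_s/2$ polar sub-transform by $\mathbf{G}_2^{\bigotimes s-1}$ whose leading input coordinate is frozen, which is exactly the premise of Theorem~\ref{theo_mSPC}. Applying that theorem (with $s-1$ in place of $s$) to each half independently delivers the two SPC identities $\sum_{i=0}^{N_s/2-1}\tilde{c}^{(\nu,s)}_{e,i} = \sum_{i=0}^{N_s/2-1}\tilde{c}^{(\nu,s)}_{o,i} = 0$, which is precisely \eqref{eq_T3}.

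The main obstacle I foresee is the index bookkeeping between $\mathbf{G}_2^{\bigotimes s}$ and $\mathbf{G}_2^{\bigotimes s-1}$: one must verify carefully that column $2j$ of the full generator really corresponds, through the block-Kronecker identification, to column $j$ of the sub-generator (and similarly for $2j+1$), and that the adjacent-row pairing produces the combinations $\mu u_{2i}+\gamma u_{2i+1}$ and $\delta u_{2i+1}$ in the correct positions. Once that identification is in place, no further computation is needed: the Type-III freezing pattern puts a zero in exactly the spot where Theorem~\ref{theo_mSPC} requires it, and the conclusion follows by two parallel applications of that theorem.
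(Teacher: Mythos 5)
Your proof is correct, but it takes a genuinely different route from the paper's. The paper proceeds by induction on $s$: it verifies the base case $s=2$ explicitly, then uses the half-block recursion $\mathbf{c}^{(\nu,s)^T}=[\mu\mathbf{c}^{(2\nu,s-1)^T}+\gamma\mathbf{c}^{(2\nu+1,s-1)^T},\,\delta\mathbf{c}^{(2\nu+1,s-1)^T}]$ together with $\mathbf{g}_0^{(s-1)^{-1}}=[\tfrac{1}{\mu}\mathbf{g}_0^{(s-2)^{-1^T}},\tfrac{\gamma}{\mu\delta}\mathbf{g}_0^{(s-2)^{-1^T}}]^T$ to show that the even/odd parity sums telescope onto those of the left child (the two copies of the right child's contribution cancel in characteristic $2$), which is again a Type-III node. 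You instead use the other Kronecker factorization, $\mathbf{G}_2^{\bigotimes s}=\mathbf{G}_2^{\bigotimes s-1}\otimes\mathbf{G}_2$, to exhibit $\mathbf{c}_e^{(\nu,s)}$ and $\mathbf{c}_o^{(\nu,s)}$ directly as images of $\mathbf{G}_2^{\bigotimes s-1}$ under the inputs $v_i^{(e)}=\mu u_{2i}+\gamma u_{2i+1}$ and $v_i^{(o)}=\delta u_{2i+1}$, observe that freezing local positions $0$ and $1$ kills $v_0^{(e)}$ and $v_0^{(o)}$, and invoke Theorem~\ref{theo_mSPC} twice; your index bookkeeping checks out against the explicit $\mathbf{G}_2^{\bigotimes 2}$ in the paper. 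Your argument is non-inductive and arguably more illuminating, since it mirrors exactly how the node is decoded (split into even/odd columns, run two M-SPC decoders), whereas the paper's induction is self-contained but re-derives the cancellation from scratch. One small precision point: as stated, Theorem~\ref{theo_mSPC} assumes $\mathcal{A}^c=\{0\}$, i.e., that position $0$ is the \emph{only} frozen input, whereas your vectors $\mathbf{v}^{(e)},\mathbf{v}^{(o)}$ merely have a vanishing $0$th coordinate (the remaining coordinates are correlated linear combinations of the free symbols). What you actually need, and what the proof of Theorem~\ref{theo_mSPC} delivers, is that $\sum_j\tilde{c}_j=u_0$ for \emph{any} input with $u_0=0$; it would be worth saying explicitly that you are using the theorem in this slightly stronger form.
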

\begin{proof}
The theorem can be proved using mathematical induction. For $s=2$, the output codeword can be written as $\mathbf{c}^{(\nu,2)}=[\gamma u_2, \gamma u_3, \delta u_2, \delta u_3]^T$ that results in $\tilde{\mathbf{c}}^{(\nu,2)}_{e}=[\frac{\gamma}{\mu} u_2, \frac{\gamma}{\mu} u_2]^T$ and $\tilde{\mathbf{c}}^{(\nu,2)}_{o}=[ \frac{\gamma}{\mu} u_3, \frac{\gamma}{\mu} u_3]^T$ based on which \eqref{eq_T3} is satisfied. For $s>2$, we presume that \eqref{eq_T3} is satisfied for polar codes with length $N_s/2$. Suppose Equation \eqref{eq_T3} is satisfied for the left child node of $(\nu,s)$, i.e., $\mathbf{c}^{(2\nu,s-1)}$. For node $(\nu,s)$, we have $\mathbf{c}^{(\nu,s)^T} = [\mu \mathbf{c}^{(2\nu,s-1)^T} + \gamma\mathbf{c}^{(2\nu+1,s-1)^T},\delta \mathbf{c}^{(2\nu+1,s-1)^T}]$. Based on the fact that $\mathbf{g}_0^{(s-1)^{-1}}=[\frac{1}{\mu} \mathbf{g}_0^{(s-2)^{-1^T}},\frac{\gamma}{\mu \delta} \mathbf{g}_0^{(s-2)^{-1^T}}]^T$, we can write $\tilde{\mathbf{c}}^{(\nu,s)^T}_{e/o} = [\tilde{\mathbf{c}}^{(2\nu,s-1)^T}_{e/o} + \frac{\gamma}{\mu}\tilde{\mathbf{c}}^{(2\nu+1,s-1)^T}_{e/o},\frac{\gamma}{\mu}\tilde{\mathbf{c}}^{(2\nu+1,s-1)^T}_{e/o}]$. Hence, for even/odd indexes, we have
\begin{equation}
\begin{split}
\sum_{j=0}^{N_s/2-1}{\tilde{{c}}^{(\nu,s)}_{e/o,j}}&=\sum_{k=0}^{N_s/4-1}\big({\tilde{{c}}^{(2\nu,s-1)}_{e/o,k}+\frac{\gamma}{\mu} \tilde{{c}}^{(2\nu+1,s-1)}_{e/o,k}}\\&+\frac{\gamma}{\mu} \tilde{{c}}^{(2\nu+1,s-1)}_{e/o,k}\big )=\sum_{k=0}^{N_s/4-1}{\tilde{{c}}^{(2\nu,s-1)}_{e/o,k}}=0.
\end{split}
\end{equation}
\end{proof}
In order to decode a Type-III node, the matrix $\mathbf{L}^{(\nu,s)}$ is divided into two submatrices,  $\mathbf{L}^{(\nu,s)}_e$ and $\mathbf{L}^{(\nu,s)}_o$, that consist of the even and odd indexed columns of $\mathbf{L}^{(\nu,s)}$, respectively. Then, $\mathbf{L}^{(\nu,s)}_e$ and $\mathbf{L}^{(\nu,s)}_o$ are used as the input of M-SPC decoder of Section \ref{sec_mspc} to decode two separate length-$N_s/2$ M-SPC codes in parallel. The resulted estimations, denoted by $\boldsymbol{\chi}^{(\nu,s)}_e$ and $\boldsymbol{\chi}^{(\nu,s)}_o$, are then placed at even and odd positions of ${\boldsymbol{\chi}}^{(\nu,s)}$, respectively, to obtain the total estimated output-vector.
\subsection{Type-IV Node}
A node $(\nu,s)$, $s\geq 2$, is identified as a Type-IV node if $\mathcal{A}_{(\nu,s)}^c=\{0,1,2\}$.
\begin{theorem}
Let a node $(\nu,s)$ be identified as a Type-IV node. Also, suppose the vector $\overline{\mathbf{g}}$ be defined as $\overline{\mathbf{g}}=[{g}_{0,0}^{(s-2)^{-1}}\mathbf{1}_{N_s/4\times 1}^T,\dots,{g}_{0,N_s/4-1}^{(s-2)^{-1}}\mathbf{1}_{N_s/4\times 1}^T]^T$. Then,
\begin{equation}
\begin{split}
\sum_{i=0}^{N_s/4-1}{\grave{{c}}^{(\nu,s)}_{4i+k}}&=r^{(2)}_{3,k} \rho,\:\:\: k=0,\dots,3,
\end{split}
\label{eq_T4}
\end{equation}
where ${\grave{\mathbf{c}}^{(\nu,s)}}$ is defined as $\grave{\mathbf{c}}^{(\nu,s)}=\diag(\overline{\mathbf{g}})\mathbf{c}^{(\nu,s)}$. Also, $\rho \in \mathbb{GF}(q)$ and $r_{3,k}^{(2)}$ is the $k$th element of $\mathbf{r}^{(2)}_3$, i.e., last row of $\mathbf{G}^{\bigotimes 2}_2$. 
\end{theorem}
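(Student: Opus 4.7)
The plan is to exploit the associativity of the Kronecker power and factor $\mathbf{G}_2^{\bigotimes s} = \mathbf{G}_2^{\bigotimes(s-2)} \bigotimes \mathbf{G}_2^{\bigotimes 2}$, re-indexing each position $4i+k$ of $\mathbf{c}^{(\nu,s)}$ as an outer index $i\in\{0,\dots,N_s/4-1\}$ paired with an inner index $k\in\{0,1,2,3\}$. Under this ordering,
\[
c^{(\nu,s)}_{4i+k} \;=\; \sum_{i_A=0}^{N_s/4-1} \bigl(\mathbf{G}_2^{\bigotimes(s-2)}\bigr)_{i_A,i}\, v_{i_A}^{(k)},
\]
where $v_{i_A}^{(k)} := \sum_{i_B=0}^{3} u_{N_s\nu+4i_A+i_B}\bigl(\mathbf{G}_2^{\bigotimes 2}\bigr)_{i_B,k}$ is the inner encoding of the $i_A$-th block of four input symbols. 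Since $\mathcal{A}^c_{(\nu,s)}=\{0,1,2\}$, the three frozen zeros all lie in the first block ($i_A=0$) and only $u_{N_s\nu+3}$ contributes there, yielding $v_0^{(k)} = u_{N_s\nu+3}\,r^{(2)}_{3,k}$. Every remaining block ($i_A\ge 1$) is fully populated with information symbols, so $v_{i_A}^{(k)}$ is otherwise unconstrained.

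Next I would multiply by $\overline{\mathbf{g}}$. Reading $\overline{\mathbf{g}}$ as piecewise constant on each 4-block with value $g^{(s-2)^{-1}}_{0,i}$ on block $i$, one has $\grave{c}^{(\nu,s)}_{4i+k} = g^{(s-2)^{-1}}_{0,i}\,c^{(\nu,s)}_{4i+k}$. Summing over $i$ with $k$ fixed and swapping the order of summation gives
\[
\sum_{i=0}^{N_s/4-1} \grave{c}^{(\nu,s)}_{4i+k} \;=\; \sum_{i_A=0}^{N_s/4-1} v_{i_A}^{(k)} \sum_{i=0}^{N_s/4-1} \bigl(\mathbf{G}_2^{\bigotimes(s-2)}\bigr)_{i_A,i}\, g^{(s-2)^{-1}}_{0,i}.
\]
The inner sum is precisely the $i_A$-th entry of $\mathbf{G}_2^{\bigotimes(s-2)}\,\mathbf{g}^{(s-2)^{-1}}_0$, i.e., the first column of $\mathbf{G}_2^{\bigotimes(s-2)}\bigl(\mathbf{G}_2^{\bigotimes(s-2)}\bigr)^{-1} = \mathbf{I}$, and therefore equals $\delta_{i_A,0}$. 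The double sum collapses to $v_0^{(k)} = u_{N_s\nu+3}\,r^{(2)}_{3,k}$, so setting $\rho := u_{N_s\nu+3}\in\mathbb{GF}(q)$ establishes \eqref{eq_T4} for every $k\in\{0,1,2,3\}$.

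The only real subtlety, and what I expect to be the main obstacle, is the index bookkeeping for the Kronecker decomposition: the outer factor $\mathbf{G}_2^{\bigotimes(s-2)}$ must act on the ``block'' index and the inner factor $\mathbf{G}_2^{\bigotimes 2}$ on the ``within-block'' index, which is exactly the structure that makes $\overline{\mathbf{g}}$ piecewise constant on 4-blocks. Once this alignment is verified, the orthogonality identity $\mathbf{G}_2^{\bigotimes(s-2)}\,\mathbf{g}^{(s-2)^{-1}}_0 = \mathbf{e}_0$ does all the work by isolating the unique frozen block at $i_A=0$. An induction on $s$ in the spirit of the Type-III theorem is also available, with base case $s=2$ following immediately from $\mathbf{c}^{(\nu,2)}=u_3\mathbf{r}^{(2)}_3$ and the inductive step exploiting the fact that the left child of a Type-IV node at level $s$ is again a Type-IV node at level $s-1$; however, that route additionally requires a recursion linking $\mathbf{g}^{(s-2)^{-1}}_0$ to $\mathbf{g}^{(s-3)^{-1}}_0$, so the direct Kronecker argument feels more transparent.
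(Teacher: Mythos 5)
Your proof is correct, but it takes a genuinely different route from the paper. The paper proves this theorem by induction on $s$ (mirroring its Type-III and GM-PC proofs): the base case $s=2$ is the M-REP node, and the inductive step uses $\mathbf{c}^{(\nu,s)^T}_k=[\mu\mathbf{c}^{(2\nu,s-1)^T}_k+\gamma\mathbf{c}^{(2\nu+1,s-1)^T}_k,\delta\mathbf{c}^{(2\nu+1,s-1)^T}_k]$ together with the recursion $\mathbf{g}_0^{(s-2)^{-1}}=[\frac{1}{\mu}\mathbf{g}_0^{(s-3)^{-1^T}},\frac{\gamma}{\mu\delta}\mathbf{g}_0^{(s-3)^{-1^T}}]^T$ to get $\grave{\mathbf{c}}^{(\nu,s)^T}_k=[\grave{\mathbf{c}}^{(2\nu,s-1)^T}_k+\frac{\gamma}{\mu}\grave{\mathbf{c}}^{(2\nu+1,s-1)^T}_k,\frac{\gamma}{\mu}\grave{\mathbf{c}}^{(2\nu+1,s-1)^T}_k]$; the right-child terms then appear twice in the sum and cancel in characteristic $2$, reducing the claim to the left child. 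Your direct argument instead factors $\mathbf{G}_2^{\bigotimes s}=\mathbf{G}_2^{\bigotimes(s-2)}\bigotimes\mathbf{G}_2^{\bigotimes 2}$ and collapses the outer sum via $\mathbf{G}_2^{\bigotimes(s-2)}\mathbf{g}^{(s-2)^{-1}}_0=\mathbf{e}_0$; the index alignment you flag as the main subtlety is indeed consistent with the paper's recursion (the two least significant bits of the codeword index attach to the inner $\mathbf{G}_2^{\bigotimes 2}$ factor). Your route buys three things the paper's does not: it identifies the constant explicitly as $\rho=u_{N_s\nu+3}$, it makes no use of characteristic $2$ (the paper's cancellation $\frac{\gamma}{\mu}x+\frac{\gamma}{\mu}x=0$ does), and it transfers verbatim to the more general GM-PC theorem by replacing $\mathbf{G}_2^{\bigotimes 2}$ with $\mathbf{G}_2^{\bigotimes s'}$. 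The paper's induction, for its part, avoids any Kronecker index bookkeeping and reuses a template already established for Type-III. One further point in your favor: as stated, $\overline{\mathbf{g}}$ has length $(N_s/4)^2$, so the repetition factor must be read as $4$ rather than $N_s/4$ for the dimensions to match $\mathbf{c}^{(\nu,s)}$; your piecewise-constant-on-$4$-blocks reading is the only one consistent with \eqref{eq_T4}, and is the one the paper implicitly uses.
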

\begin{proof}
We employ mathematical induction to prove this theorem. For $s=2$, \eqref{eq_T4} holds because node $(\nu,s)$ turns into an M-REP node. For $s>2$, we presume that \eqref{eq_T4} holds for block size $N_s/2$.  Let the vector $\mathbf{c}^{(\nu,s)^T}$ be partitioned into $4$ subvectors as $\mathbf{c}^{(\nu,s)^T}_k=[{c}^{(\nu,s)}_{k}, {c}^{(\nu,s)}_{4+k},\dots,{c}^{(\nu,s)}_{N_s-4+k}]$, $k=0,\dots,3$. Suppose Equation \eqref{eq_T4} is satisfied for the left child node of $(\nu,s)$, i.e., $\mathbf{c}^{(2\nu,s-1)}$. For node $(\nu,s)$, we have $\mathbf{c}^{(\nu,s)^T} = [\mu \mathbf{c}^{(2\nu,s-1)^T} + \gamma\mathbf{c}^{(2\nu+1,s-1)^T},\delta \mathbf{c}^{(2\nu+1,s-1)^T}]$ or, equivalently, $\mathbf{c}^{(\nu,s)^T}_k =[\mu \mathbf{c}^{(2\nu,s-1)^T}_k + \gamma\mathbf{c}^{(2\nu+1,s-1)^T}_k,\delta \mathbf{c}^{(2\nu+1,s-1)^T}_k]$, $k=0,\dots,3$. Based on the fact that $\mathbf{g}_0^{(s-2)^{-1}}=[\frac{1}{\mu}\mathbf{g}_0^{(s-3)^{-1^T}},\frac{\gamma}{\mu \delta} \mathbf{g}_0^{(s-3)^{-1^T}}]^T$, we can write $\grave{\mathbf{c}}^{(\nu,s)^T}_k =[\grave{\mathbf{c}}^{(2\nu,s-1)^T}_k + \frac{\gamma}{\mu}\grave{\mathbf{c}}^{(2\nu+1,s-1)^T}_k,\frac{\gamma}{\mu}\grave{\mathbf{c}}^{(2\nu+1,s-1)^T}_k]$. It is then easy to verify that 
\begin{equation}
\sum_{i=0}^{N_s/4-1}\grave{{c}}^{(\nu,s)}_{4i+k}=\sum_{r=0}^{N_s/8-1}\grave{{c}}^{(2\nu,s-1)}_{4r+k}=\rho r^{(2)}_{3,k},\: k=0,\dots,3,
\end{equation}
which completes the proof.
\end{proof}
A Type-IV node has a length-$4$ M-REP node as its left most descendant whose output is the vector $\rho \mathbf{r}^{(2)}_3$. Hence, to decode a Type-IV node, the LNBSC decoding continues until the M-REP node is decoded. Then, the four partitions of LLR matrix defined as $\mathbf{L}^{(\nu,s)}_k=[\boldsymbol{\ell}^{(\nu,s)}_{k}, \boldsymbol{\ell}^{(\nu,s)}_{4+k},\dots,\boldsymbol{\ell}^{(\nu,s)}_{N_s-4+k}]$, $k=0,\dots,3$, along with the M-REP node output are used to decoded $4$ separate M-SPC equations with size $N_s/4$ in parallel. Finally, the estimated vectors, according to their partitions are concatenated to obtain the total output of a Type-IV node.
\subsection{Type-V Node}
A node $(\nu,s)$, $s\geq 2$, is identified as a Type-V node if $\mathcal{A}_{(\nu,s)}=\{N_s-5, N_s-3, N_s-2, N_s-1\}$.
\begin{theorem}
If a node $(\nu,s)$ is identified as a Type-V node, its output vector is
\begin{equation}
\mathbf{c}^{(\nu,s)^T}=[{r}^{(s-3)}_{N_s/8-1,0}\mathbf{c}^{(r)^T},\dots,{r}^{(s-3)}_{N_s/8-1,N_s/8-1}\mathbf{c}^{(r)^T}],
\label{eq_T5}
\end{equation} 
where $\mathbf{r}^{(s-3)}_{N_s/8-1}=[{r}^{(s-3)}_{N_s/8-1,0},\dots,{r}^{(s-3)}_{N_s/8-1, N_s/8-1}]^T$ is the last row of $\mathbf{G}^{\bigotimes s-3}_2$ and $\mathbf{c}^{(r)}=[{c}^{(r)}_0,\dots,{c}^{(r)}_7]^T$ is the output vector of a node with size $8$ whose left and right children are M-REP and M-SPC nodes, respectively.
\end{theorem}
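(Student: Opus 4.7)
The plan is to prove the theorem by induction on $s$, exactly in the spirit of the Type-II proof. The key observation is that a Type-V node decomposes cleanly under the Kronecker recursion: at the base level $s=3$ the children themselves are the ``inner'' M-REP and M-SPC nodes that build $\mathbf{c}^{(r)}$, while for $s\geq 4$ one child is Rate-$0$ and the other is a smaller Type-V node. Unrolling this one level at a time should reproduce the advertised tiled structure.

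For the base case $s=3$, I would substitute $N_s=8$ to obtain $\mathcal{A}_{(\nu,3)}=\{3,5,6,7\}$. Splitting across the midpoint, the left child has info set $\{3\}=\{N_{s-1}-1\}$, an M-REP node of size $4$; the right child, after offsetting by $4$, has info set $\{1,2,3\}$ with frozen complement $\{0\}$, an M-SPC node of size $4$. Hence $\mathbf{c}^{(\nu,3)}$ coincides by definition with $\mathbf{c}^{(r)}$, and the claimed formula collapses to $\mathbf{c}^{(\nu,3)^T}=r^{(0)}_{0,0}\mathbf{c}^{(r)^T}=\mathbf{c}^{(r)^T}$ since $\mathbf{G}^{\bigotimes 0}_2=[1]$.

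For the inductive step with $s\geq 4$, I would first note that $N_s\geq 16$ forces $N_s-5\geq N_s/2$, so every information index lies in the upper half of the node. The left child $(2\nu,s-1)$ is therefore Rate-$0$ with $\mathbf{c}^{(2\nu,s-1)}=\mathbf{0}$, while the right child, after the offset by $N_s/2=N_{s-1}$, carries the info set $\{N_{s-1}-5,N_{s-1}-3,N_{s-1}-2,N_{s-1}-1\}$---precisely a Type-V node of size $N_{s-1}$. The kernel recursion
\begin{equation*}
\mathbf{c}^{(\nu,s)^T}=\bigl[\mu\mathbf{c}^{(2\nu,s-1)^T}+\gamma\mathbf{c}^{(2\nu+1,s-1)^T},\; \delta\mathbf{c}^{(2\nu+1,s-1)^T}\bigr]
\end{equation*}
then collapses to $[\gamma\mathbf{c}^{(2\nu+1,s-1)^T},\delta\mathbf{c}^{(2\nu+1,s-1)^T}]$. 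Substituting the inductive hypothesis for $\mathbf{c}^{(2\nu+1,s-1)}$ and invoking the last-row identity $\mathbf{r}^{(s-3)}_{N_s/8-1}=[\gamma\mathbf{r}^{(s-4)^T}_{N_s/16-1},\,\delta\mathbf{r}^{(s-4)^T}_{N_s/16-1}]^T$, which is immediate from $\mathbf{G}^{\bigotimes s-3}_2=\mathbf{G}_2\otimes \mathbf{G}^{\bigotimes s-4}_2$, then aligns the $\gamma$- and $\delta$-scaled blocks with the two halves of $\mathbf{r}^{(s-3)}_{N_s/8-1}$ and delivers the claimed expression.

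I expect no conceptual obstacle; the main care point will be the index bookkeeping---verifying that the right child is genuinely Type-V at each recursion level (rather than some nearby pattern), and lining up the $N_s/16$-copy blocks from the inductive hypothesis with the upper and lower halves of $\mathbf{r}^{(s-3)}_{N_s/8-1}$. All machinery has already been exercised for Type-II through Type-IV, so the argument is essentially a structural unrolling of the Kronecker recursion.
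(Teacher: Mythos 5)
Your proof is correct and follows essentially the same route as the paper: the paper's proof of this theorem is a one-line deferral to the induction used for Type-II nodes (base case where the node equals the inner size-$8$ (M-REP)--(M-SPC) node, inductive step where the left child is Rate-$0$ and the kernel recursion $\mathbf{c}^{(\nu,s)^T}=[\gamma\mathbf{c}^{(2\nu+1,s-1)^T},\delta\mathbf{c}^{(2\nu+1,s-1)^T}]$ tiles the result against the last row of $\mathbf{G}^{\bigotimes s-3}_2$), which is exactly what you spell out. Your index checks (left child M-REP / right child M-SPC at $s=3$, right child Type-V for $s\geq 4$) are accurate and in fact more explicit than the paper's own argument.
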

\begin{proof}
The theorem can be proved using a similar approach as in the proof of Theorem \ref{theo_T2}.
\end{proof}
In order to decode a Type-V node, the columns of the soft input matrix $\mathbf{L}^{(\nu,s)}$ are permuted as
\begin{equation}
\ddot{\boldsymbol{\ell}}_i^{(\nu,s)}=\boldsymbol{\Pi}^{\times^{-1}}_{r_{N_s/8-1,\lfloor i/8 \rfloor}^{(s-3)}}\boldsymbol{\ell}_i^{(\nu,s)},\: \:0\leq i\leq N_s-1.
\end{equation}
Then, the columns of LLR matrix $\mathbf{L}^{(r)}=[\boldsymbol{\ell}^{(r)}_0,\dots,\boldsymbol{\ell}^{(r)}_7]$ are obtained as
\begin{equation}
\boldsymbol{\ell}^{(r)}_j=\sum_{k=0}^{N_S/8-1}\ddot{\boldsymbol{\ell}}_{8k+j}^{(\nu,s)}, \: j=0,\dots,7.\label{eq:sumllr_T5}
\end{equation} 
The calculated matrix $\mathbf{L}^{(r)}$ is then used as the input of the concatenated (M-REP)-(M-SPC) node to obtain the estimated vector $\boldsymbol{\chi}^{(r)}=[{\chi}^{(r)}_0,\dots,{\chi}^{(r)}_7]^T$. The output vector, $\boldsymbol{\chi}^{(\nu,s)}$, is eventually computed as
\begin{equation}
\boldsymbol{\chi}^{(\nu,s)^T}=[{r}^{(s-3)}_{N_s/8-1,0}\boldsymbol{\chi}^{(r)^T},\dots,{r}^{(s-3)}_{N_s/8-1,N_s/8-1}\boldsymbol{\chi}^{(r)^T}].
\end{equation}  
\subsection{GM-REP Node}
\begin{figure}[!t]
\centering
\includegraphics[width=3.5in]{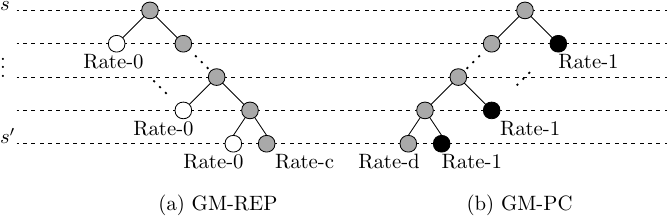}
\caption{General structures of (a) GM-REP  and (b) GM-PC  nodes.}
\label{fig_gmpc_gmrep}
\end{figure}
The structure of a generalized multiplicative repetition (GM-REP) node is shown in Fig.  \ref{fig_gmpc_gmrep}.a. A node $(\nu,s)$ at level $s$ is identified as a GM-REP node for which all its descendants are Rate-0 nodes except the rightmost descendant at level $s'<s$ that is a generic node with rate $c$ (Rate-c). We refer to the Rate-c node as the \emph{source node} of a GM-REP node.
\begin{theorem}
Let $\mathbf{c}^{(s')}$ with size $N_{s'}=2^{s'}$ denotes the output vector of the source node in a GM-REP node. Then, the vector $\mathbf{c}^{(\nu,s)}$ at the top of the GM-REP node can be obtained as
\begin{equation}
\mathbf{c}^{(\nu,s)^T}=[{r}^{(s-s')}_{2^{s-s'}-1,0}\mathbf{c}^{(s')^T},\dots,{r}^{(s-s')}_{2^{s-s'}-1,2^{s-s'}-1}\mathbf{c}^{(s')^T}],
\label{eq_grep}
\end{equation}
where $\mathbf{r}^{(s-s')}_{2^{s-s'}-1}=[{r}^{(s-s')}_{2^{s-s'}-1,0},\dots,{r}^{(s-s')}_{N_s/N_{s'}-1,2^{s-s'}-1}]^T$ is the last row of $\mathbf{G}^{\bigotimes s-s'}_2$.
\end{theorem}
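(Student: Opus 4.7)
The plan is to prove the theorem by induction on the depth $s - s'$, analogous to the induction used in Theorem \ref{theo_T2}. The key structural observation is that a GM-REP node whose source sits $s - s'$ levels below it has a left child which is itself a Rate-0 node of size $2^{s-1}$ and a right child which is a GM-REP node of depth $s - 1 - s'$ with the same source node at its bottom-right corner. This recursive structure is exactly what induction is suited to.

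For the base case $s - s' = 1$, the node $(\nu,s)$ has its left child as a Rate-0 node and its right child as the source node producing $\mathbf{c}^{(s')}$. Applying the kernel transformation $\mathcal{T}$ directly gives $\mathbf{c}^{(\nu,s)^T} = [\gamma \mathbf{c}^{(s')^T}, \delta \mathbf{c}^{(s')^T}]$, which matches \eqref{eq_grep} since $\mathbf{r}^{(1)}_1 = [\gamma,\delta]^T$, the last row of $\mathbf{G}_2$. For the inductive step, I would assume \eqref{eq_grep} holds for depth $s-1-s'$, so that the right child satisfies
\begin{equation*}
\mathbf{c}^{(2\nu+1,s-1)^T} = [r^{(s-1-s')}_{2^{s-1-s'}-1,0}\mathbf{c}^{(s')^T},\dots,r^{(s-1-s')}_{2^{s-1-s'}-1,2^{s-1-s'}-1}\mathbf{c}^{(s')^T}],
\end{equation*}
while the left child, being Rate-0, outputs $\mathbf{c}^{(2\nu,s-1)} = \mathbf{0}$. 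The encoding relation at the parent then gives $\mathbf{c}^{(\nu,s)^T} = [\gamma \mathbf{c}^{(2\nu+1,s-1)^T}, \delta \mathbf{c}^{(2\nu+1,s-1)^T}]$.

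The final step is to match this expression with the right-hand side of \eqref{eq_grep}. This is where the Kronecker structure of $\mathbf{G}^{\bigotimes s-s'}_2 = \mathbf{G}_2 \otimes \mathbf{G}^{\bigotimes s-s'-1}_2$ enters: its last row decomposes as $\mathbf{r}^{(s-s')}_{2^{s-s'}-1} = [\gamma\,\mathbf{r}^{(s-s'-1)^T}_{2^{s-s'-1}-1},\; \delta\,\mathbf{r}^{(s-s'-1)^T}_{2^{s-s'-1}-1}]^T$. Substituting this decomposition into \eqref{eq_grep} and grouping the first $2^{s-s'-1}$ blocks (scaled by $\gamma$) separately from the last $2^{s-s'-1}$ blocks (scaled by $\delta$) yields exactly $[\gamma \mathbf{c}^{(2\nu+1,s-1)^T}, \delta \mathbf{c}^{(2\nu+1,s-1)^T}]$, completing the induction.

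The only real subtlety, and the step I would be most careful with, is the bookkeeping of the Kronecker structure of the last row of $\mathbf{G}^{\bigotimes s-s'}_2$ and its alignment with the block partition induced by the inductive hypothesis; once that indexing is set up correctly, the rest is immediate. No additional nontrivial arguments (hard-decision reasoning, permutations, or even-weight arguments as in M-SPC) are needed, since the GM-REP structure is purely a statement about the encoded codeword, not about decoding.
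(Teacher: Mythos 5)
Your proof is correct and follows essentially the same route as the paper: the paper proves this theorem by pointing to the proof of the Type-II theorem, which is precisely the induction you describe, using the recursion $\mathbf{c}^{(\nu,s)^T}=[\mathbf{0}^T+\gamma\,\mathbf{c}^{(2\nu+1,s-1)^T},\;\delta\,\mathbf{c}^{(2\nu+1,s-1)^T}]$ arising from the Rate-0 left child. Your write-up simply makes explicit the Kronecker decomposition of the last row of $\mathbf{G}^{\bigotimes s-s'}_2$ that the paper leaves implicit.
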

\begin{proof}
The proof is similar to Theorem \ref{theo_T2}.
\end{proof}
In order to decode a GM-REP node, the columns of the soft input matrix $\mathbf{L}^{(\nu,s)}$ are permuted as
\begin{equation}
\check{\boldsymbol{\ell}}_i^{(\nu,s)}=\boldsymbol{\Pi}^{\times^{-1}}_{r_{2^{s-s'}-1,\lfloor i/N_{s'} \rfloor}^{(s-s')}}\boldsymbol{\ell}_i^{(\nu,s)},\: \:0\leq i\leq N_s-1.
\end{equation}
Then, the columns of LLR matrix $\mathbf{L}^{(s')}=[\boldsymbol{\ell}^{(s')}_0,\dots,\boldsymbol{\ell}^{(s')}_{N_{s'}-1}]$ are obtained as
\begin{equation}
\boldsymbol{\ell}^{(s')}_j=\sum_{k=0}^{2^{s-s'}-1}\check{\boldsymbol{\ell}}_{N_{s'}k+j}^{(\nu,s)}, \: j=0,\dots,N_{s'}-1.
\end{equation}
Matrix $\mathbf{L}^{(s')}$ is then used as the input of the source node to obtain the estimated vector $\boldsymbol{\chi}^{(s')}=[{\chi}^{(s')}_0,\dots,{\chi}^{(s')}_{N_{s'}}]^T$. The output vector, $\boldsymbol{\chi}^{(\nu,s)}$, is finally computed as
\begin{equation}
\boldsymbol{\chi}^{(\nu,s)^T}=[{r}^{(s-s')}_{2^{s-s'}-1,0}\boldsymbol{\chi}^{(s')^T},\dots,{r}^{(s-s')}_{2^{s-s'}-1,2^{s-s'}-1}\boldsymbol{\chi}^{(s')^T}].
\end{equation}  
\begin{remark}
Type-I, Type-II and Type-V nodes are special instances of the GM-REP node. 
\end{remark}
\begin{remark}
\textcolor{black}{The source-node in a GM-REP node can be a generic node with any rate. However, in this paper, a node is identified as a GM-REP node only if its source-node be a special node.} 
\end{remark}
\begin{remark}
The G-REP nodes in BPCs are generalized in \cite{SR} where the SR nodes are proposed. The left descendants of SR nodes include both Rate-0 and REP nodes. We could, similarly, define non-binary SR nodes for NBPCs. However, the high complexity of non-binary SR nodes prevented us from doing so. In fact, if the number of left-descendants that are REP nodes be $n_r$, then the number of parallel source-node decoders in BPCs is $2^{n_r}$. This is while the number of parallel source-node decoding for NBPCs is $q^{n_r}$ which can be a huge number. 
\end{remark}
\subsection{GM-PC Node}
The general structure of a generalized multiplicative parity check (GM-PC) node is shown in Fig.  \ref{fig_gmpc_gmrep}.b. A node $(\nu,s)$ at level $s$ is identified as a GM-PC node if all its descendants are Rate-1 nodes except the leftmost descendant at level $s'<s$ that is a generic node with rate $d$ (Rate-d). We refer to the Rate-d node as the \emph{parity node} of a GM-PC node.
\begin{theorem}
Let $\mathbf{c}^{(s')}$ with size $N_{s'}=2^{s'}$ denotes the output vector of the parity node in a GM-PC node. Also, suppose the vector $\overline{\overline{\mathbf{g}}}$ be defined as $\overline{\overline{\mathbf{g}}}=[{g}_{0,0}^{(s-s')^{-1}}\mathbf{1}_{2^{s-s'}\times 1}^T,\dots,{g}_{0,2^{s-s'}-1}^{(s-s')^{-1}}\mathbf{1}_{2^{s-s'}\times 1}^T]^T$. Then,
\begin{equation}
\sum_{i=0}^{2^{s-s'}-1}{\grave{\grave{{c}}}^{(\nu,s)}_{iN_{s'}+j}}={c}^{(s')}_j,\:\:j=0,\dots,N_{s'}-1,
\label{eq_gmpc}
\end{equation}
where $\grave{\grave{{c}}}^{(\nu,s)}_k$ are the elements of the vector $\grave{\grave{\mathbf{c}}}^{(\nu,s)}=\diag(\overline{\overline{\mathbf{g}}})\mathbf{c}^{(\nu,s)}$.
\end{theorem}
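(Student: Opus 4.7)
The plan is to prove the identity by induction on $k := s - s'$, which mirrors the inductive structure used for Theorem on Type-IV nodes (the statement here being its natural generalization from parity node $=$ M-REP to an arbitrary parity node).

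\textbf{Base case} ($k=1$). Here the node $(\nu,s)$ has the parity node as its left child at level $s'=s-1$ and a Rate-$1$ node as its right child. Writing $\mathbf{a}=\mathbf{c}^{(2\nu,s-1)}=\mathbf{c}^{(s')}$ and $\mathbf{b}=\mathbf{c}^{(2\nu+1,s-1)}$, the polar update gives $c^{(\nu,s)}_j=\mu a_j+\gamma b_j$ and $c^{(\nu,s)}_{N_{s'}+j}=\delta b_j$. Since $\mathbf{g}_0^{(1)^{-1}}=[1/\mu,\;\gamma/(\mu\delta)]^T$, direct substitution gives $\grave{\grave{c}}^{(\nu,s)}_j+\grave{\grave{c}}^{(\nu,s)}_{N_{s'}+j}=a_j+(\gamma/\mu)b_j+(\gamma/\mu)b_j=a_j=c^{(s')}_j$, where the two $b_j$ terms cancel because $\mathbb{GF}(q)$ has characteristic $2$.

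\textbf{Inductive step.} Assume the identity holds for $s-s'=k$, and take $(\nu,s)$ with $s-s'=k+1$. Its left child $(2\nu,s-1)$ is again a GM-PC node with the same parity node at level $s'$ (now with $(s-1)-s'=k$), while its right child $(2\nu+1,s-1)$ is a Rate-$1$ node. The key algebraic ingredient is the Kronecker identity $\mathbf{G}_2^{\bigotimes(k+1)^{-1}}=\mathbf{G}_2^{-1}\otimes \mathbf{G}_2^{\bigotimes k^{-1}}$, which forces the first column to split as
\begin{equation*}
\mathbf{g}_0^{(k+1)^{-1}}=\bigl[(1/\mu)\,\mathbf{g}_0^{(k)^{-1}\,T},\; (\gamma/(\mu\delta))\,\mathbf{g}_0^{(k)^{-1}\,T}\bigr]^T.
\end{equation*}
Consequently, writing $\mathbf{a}=\mathbf{c}^{(2\nu,s-1)}$ and $\mathbf{b}=\mathbf{c}^{(2\nu+1,s-1)}$, and splitting the outer index $i\in\{0,\ldots,2^{k+1}-1\}$ into the ranges $i<2^k$ (where $c^{(\nu,s)}_{iN_{s'}+j}=\mu a_{iN_{s'}+j}+\gamma b_{iN_{s'}+j}$) and $i\ge 2^k$ (where $c^{(\nu,s)}_{iN_{s'}+j}=\delta b_{(i-2^k)N_{s'}+j}$), a short computation yields
\begin{equation*}
\sum_{i=0}^{2^{k+1}-1}\grave{\grave{c}}^{(\nu,s)}_{iN_{s'}+j}=\sum_{i=0}^{2^{k}-1}g_{0,i}^{(k)^{-1}}a_{iN_{s'}+j}+2\,(\gamma/\mu)\sum_{i=0}^{2^{k}-1}g_{0,i}^{(k)^{-1}}b_{iN_{s'}+j}.
\end{equation*}
The $b$-sum vanishes in characteristic $2$, leaving exactly the left-child sum on which the induction hypothesis applies; this equals $c^{(s')}_j$.

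\textbf{Main obstacle.} The routine part is the propagation through the polar recursion; the subtle part is bookkeeping the scaling vector $\overline{\overline{\mathbf{g}}}$ and matching it to the Kronecker split of $\mathbf{g}_0^{(k+1)^{-1}}$, so that the weights multiplying $\mathbf{a}$ in the first half and $\mathbf{b}$ in the two halves align correctly. Once the Kronecker identity above is invoked, the cancellation of the $\mathbf{b}$ contributions is forced by characteristic $2$, just as in the M-SPC proof (Theorem on M-SPC), and the result reduces to the inductive hypothesis on the left child.
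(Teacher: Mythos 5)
Your proof is correct and follows essentially the same route as the paper's: induction on the depth $s-s'$, using the Kronecker split $\mathbf{g}_0^{(k+1)^{-1}}=[\tfrac{1}{\mu}\mathbf{g}_0^{(k)^{-1\,T}},\tfrac{\gamma}{\mu\delta}\mathbf{g}_0^{(k)^{-1\,T}}]^T$ so that the right-child contributions appear twice and cancel in characteristic $2$, reducing the sum to the left child's, on which the hypothesis applies. The only (immaterial) difference is that you anchor the induction at $s-s'=1$ with an explicit check, whereas the paper starts at $s=s'$, where the node coincides with its parity node and the identity is trivial.
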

\begin{proof}
We employ mathematical induction to prove this theorem. For $s=s'$, \eqref{eq_gmpc} holds because node $(\nu,s)$ is equivalent to its parity node with codeword $\mathbf{c}^{(s')}$. For $s>s'$, we presume that \eqref{eq_gmpc} holds for block size $N_s/2$.  Let the vector $\mathbf{c}^{(\nu,s)^T}$ be partitioned into $N_{s'}$ subvectors as $\mathbf{c}^{(\nu,s)^T}_k=[{c}^{(\nu,s)}_{k}, {c}^{(\nu,s)}_{N_{s'}+k}, {c}^{(\nu,s)}_{2N_{s'}+k},\dots,{c}^{(\nu,s)}_{N_s-N_{s'}+k}]$, $k=0,\dots,N_{s'}$. Suppose Equation \eqref{eq_gmpc} is satisfied for the left child node of $(\nu,s)$, i.e., $\mathbf{c}^{(2\nu,s-1)}$. For node $(\nu,s)$, we have $\mathbf{c}^{(\nu,s)^T} = [\mu \mathbf{c}^{(2\nu,s-1)^T} + \gamma\mathbf{c}^{(2\nu+1,s-1)^T},\delta \mathbf{c}^{(2\nu+1,s-1)^T}]$ or, equivalently, $\mathbf{c}^{(\nu,s)^T}_k =[\mu \mathbf{c}^{(2\nu,s-1)^T}_k + \gamma\mathbf{c}^{(2\nu+1,s-1)^T}_k,\delta \mathbf{c}^{(2\nu+1,s-1)^T}_k]$, $k=0,\dots,N_{s'}-1$. Since $\mathbf{g}_0^{(s-s')^T}=[\frac{1}{\mu}\mathbf{g}_0^{(s-s'-1)^T},\frac{\gamma}{\mu \delta} \mathbf{g}_0^{(s-s'-1)^T}]$, we can write $\grave{\grave{\mathbf{c}}}^{(\nu,s)^T}_k =[\grave{\grave{\mathbf{c}}}^{(2\nu,s-1)^T}_k + \frac{\gamma}{\mu}\grave{\grave{\mathbf{c}}}^{(2\nu+1,s-1)^T}_k,\frac{\gamma}{\mu}\grave{\grave{\mathbf{c}}}^{(2\nu+1,s-1)^T}_k]$. It is then easy to verify that 
\begin{equation}
\sum_{i=0}^{2^{s-s'}-1}\grave{\grave{{c}}}^{(\nu,s)}_{iN_{s'}+j}=\sum_{r=0}^{2^{s-s'-1}-1}\grave{\grave{{c}}}^{(2\nu,s-1)}_{rN_{s'}+j},\: j=0,\dots,N_{s'}-1,
\end{equation}
which completes the proof.
\end{proof}
\begin{remark}
Type-III and Type-IV nodes are special instances of the GM-PC node. 
\end{remark}
\begin{remark}
\textcolor{black}{The parity-node in a GM-PC node can be a generic node with any rate. However, in this paper, a node is identified as a GM-PC node only if its parity-node be a special node. }
\end{remark}
\section{Proposed Simplified NBPC structure\\ Suitable for Fast Decoding}\label{sec:Gnbpc}
\begin{figure}[!t]
\centering
\includegraphics[width=3in]{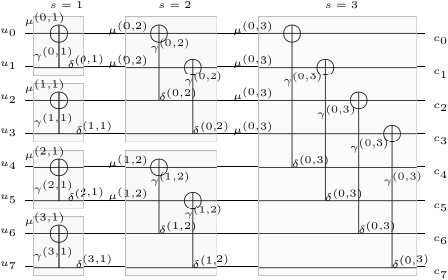}
\caption{Example of the general structure of NBPCs with variable kernel-coefficients in polarization units. Each gray box corresponds to a node at level $s$ of the binary-tree of the polar code.}\label{fig:simplifiedNBPC}
\end{figure}
\subsection{Proposed NBPC Structure}
%\begin{figure*}
%\centering
%\includegraphics[width=5in]{systencGNBP.eps}
%\caption{Proposed systematic encoding.}
%\label{fig:systenc}
%\end{figure*}

In the previous sections, according to Equation \eqref{eq:polarencfix}, it is assumed that the kernel coefficients in all the polarization stages are equal to fixed values, denoted by $\mu$, $\gamma$, and $\delta$. However, the polarization units can generally have different coefficients. As an example, in Fig. \ref{fig:simplifiedNBPC}, the factor graph of a length-$8$ NBPC with flexible kernel coefficients is demonstrated. Considering flexible coefficients might result in improved performance of an NBPC. The optimization of kernel coefficients is typically aimed at accelerating the speed of polarization such that the difference between the quality of ``good" and ``bad" channels, synthesized by a polarization transform, is maximized \cite{savin}. We refer to the flexible kernel-coefficient NBPCs as generalized NBPCs (G-NBPCs). Our goal in this section is to propose codes that have very low decoding complexity under our proposed fast decoding algorithm.

There is a one-to-one correspondence between the factor graph of a polar code with its binary tree representation. For example, in the factor graph of Fig. \ref{fig:simplifiedNBPC}, the gray boxes are associated with different nodes in the binary tree of the code. Let the kernel coefficients and generator matrix associated to node $(\nu,s)$ in the binary-tree of a G-NBPC be denoted by $\mu^{(\nu,s)}$, $\gamma^{(\nu,s)}$, $\delta^{(\nu,s)}$ and $\boldsymbol{\mathcal{G}}^{(\nu,s)}$, respectively. Then, we can derive the following recursive relation
\begin{equation}
\boldsymbol{\mathcal{G}}^{(\nu,s)} = \begin{bmatrix} \mu^{(\nu,s)}\boldsymbol{\mathcal{G}}^{(2\nu,s-1)}  & 0 \\ 
\gamma^{(\nu,s)} \boldsymbol{\mathcal{G}}^{(2\nu+1,s-1)}&  \delta^{(\nu,s)}\boldsymbol{\mathcal{G}}^{(2\nu+1,s-1)} \end{bmatrix},
\end{equation}
where $\boldsymbol{\mathcal{G}}^{(2\nu,s-1)}$ and $\boldsymbol{\mathcal{G}}^{(2\nu+1,s-1)}$ are the generator matrices corresponding to the child nodes of node $(\nu,s)$. Also, for $s=1$, we have
\begin{equation}
\boldsymbol{\mathcal{G}}^{(\nu,1)} = \begin{bmatrix} \mu^{(\nu,1)}  & 0 \\ 
\gamma^{(\nu,1)} &  \delta^{(\nu,1)} \end{bmatrix}.
\end{equation}
Using the above notations, the input sequence $\mathbf{u}=[u_0,\dots,u_{N-1}]^T$ is encoded into a G-NBPC codeword using the following equation
\begin{equation}
\mathbf{c}^T=\mathbf{u}^T\boldsymbol{\mathcal{G}}^{(0,n)},
\end{equation}
where $\boldsymbol{\mathcal{G}}^{(0,n)}$ is the generator matrix of G-NBPC. It is noted that for fixed kernel-coefficients, $\boldsymbol{\mathcal{G}}^{(0,n)}$ would be equal to $\mathbf{G}^{\bigotimes n}_2$ in \eqref{eq:polarencfix}.  

As it was explained in Section \ref{sec_emssc}, CN and VN updates at different stages of binary tree require a permutation stage corresponding to the polarization kernel coefficients. In addition, according to Section \ref{sec:fastdecoding}, decoding of the special nodes (with the exception of Rate-0 and Rate-1 nodes) requires a pre-permutation of LLR-vectors as well as a post-multiplication of estimated hard-symbol vectors. In order to avoid such operations, we propose to deploy the G-NBPC structure with the following condition on kernel coefficients:
\begin{equation}
\mu^{(\nu,s)}=1,\gamma^{(\nu,s)}=1,\delta^{(\nu,s)} = 1\:\:\:\: \text{for}\:\:\:\: 1 \leq s < s_0,
\end{equation}
where $s_0$ is a threshold that is determined empirically. In the proposed structure, the kernel coefficients are only optimized for $s_0\leq s \leq n$. Also, we limit the maximum size of the special nodes to $2^{(s_0-1)}$. As a result, the kernel coefficients in the identified special nodes would be equal to $1$ meaning that all the permutation stages in the fast decoding are removed. As an example, suppose the threshold $s_0$ in Fig. \ref{fig:simplifiedNBPC} is set to $s_0=2$. This means the optimization is only done for the coefficients at levels $s=2,3$ and all the coefficients at level $s=1$ are set to $1$.  
\section{Latency Analysis}
In this section, the decoding latency of the proposed fast-LNBSC decoder is analyzed. To facilitate the analysis, the following assumptions are made which are similar to existing assumptions made for the latency analysis of BPCs. 
\begin{itemize}
\item[1)] There is no resource limitation so all the parallelizable instructions can be carried out in one time step. 
\item[2)] Non-binary field operations on symbols (such as \eqref{NBupmsg} and \eqref{eq:mrepoutput}) are carried out instantaneously\footnote{The non-binary XOR can be performed by applying bit-wise binary XOR on binary representation of NB symbols.}. 
\item[3)] Addition/subtraction of real numbers is performed in one time step.
\item[4)] The permutation of elements in LLR-vectors are carried out instantaneously\footnote{Permutation matrices corresponding to different kernel coefficients as well as special nodes are fixed and do not depend on a specific codeword or noise realization. The only permutation matrix that is altered during the decoding operation is $\boldsymbol{\Pi}^+_{\mu\chi_k^{(2\nu,s-1)}}$ in \eqref{eq:gperm}. Therefore, it is generally reasonable to assume permutation of LLR vectors can be done instantaneously.}.
\item [5)] The CN operation in \eqref{NBcn} consumes $2$ time-steps which includes $1$ time-step for performing $q^2$ parallel additions and another time-step for $q$ parallel minimum-finding operations. 
\item[6)] The VN operation consumes $2$ time-steps which includes $1$ time-step for performing the addition in \eqref{NBvn} and $1$ time-step for executing \eqref{bias_rem}.  
\item[7)] If, in accordance with \eqref{symbLLR}, the input vector in \eqref{NB_harddec} contains at least one zero member, then the non-binary hard-decision is equivalent to finding the symbol corresponding to the zero-element and can be carried out instantaneously. Otherwise, it takes $1$ time-step to perform \eqref{NB_harddec}. While the former comprises the majority of cases including the LLR vectors at the leaf nodes, the latter might happen during the decoding of certain special nodes such as M-REP nodes.  
\end{itemize}
Based on the above assumptions, it takes $4N-4$ time steps for LNBSC algorithm to decode a single non-binary codeword. In the following, we present the decoding latency of different special nodes.

Decoding the Rate-0 and Rate-1 nodes does not take any time steps. In fact, the output of a Rate-0 node is known in advance, and decoding a Rate-1 node consists of a hard decision which according to assumption $7$ can be done instantaneously. The fast-LNBSC decoder for an M-REP node computes the summation and hard decision of \eqref{eq_hd_rep} in $2$ time-steps. Note that according to assumptions $2$ and $4$, execution of \eqref{eq_permutLLR} and \eqref{eq:mrepoutput} do not take any time steps. Regarding the M-SPC nodes, the latency is determined by the worst-case scenario where the parity equation is not satisfied. The M-SPC node decoder in the simplified version of Algorithm \ref{alg:nbSPC2} (gray lines) computes the hard-decision and parity-check operation concurrently with the generation of candidates for each position in $1$ time-step. In the next time step, it selects the symbol candidate with minimum associated LLR. As such, the latency of the simplified algorithm is $2$ time-steps. In the full version of Algorithm \ref{alg_nbSPC2}, hard-decision and parity-check operation, generation of candidates for each position, and finding the second-minimum LLR in each column of LLR matrix are performed simultaneously and in $1$ time-step. The next time-step, storing the minimum LLR and its associated symbol candidate, finding the minimum value among second-minimum LLRs, and generating symbol candidates for different positions except the one corresponding to the second-minimum LLR are performed in parallel. In the last time-step, the penalties corresponding to the two generated outputs are compared and the output with the minimum penalty is selected. As such, the latency of Algorithm \ref{alg_nbSPC2} is $3$ time-steps. In the following, we assume M-SPC nodes are decoded by executing all the lines in Algorithm \ref{alg:nbSPC2}.

For Type-I node, our proposed decoder consumes $2$ time steps as it is equivalent to decoding two parallel M-REP nodes. Type-II decoder consumes one time-step for performing the summation in \eqref{eq:T2summationllr} and $3$ time-steps for decoding the M-SPC subnode. Hence, the latency of Type-II node is $4$ time-steps. The latency of Type-III node is similar to M-SPC node, i.e., $3$ time-steps. In Type-V nodes, $1$ time-step is consumed for summation in \eqref{eq:sumllr_T5}, $2$ time-steps is used for M-REP subnode followed by $3$ time-steps for M-SPC subnode. As such, the latency of Type-V node is $6$ time-steps.

For GM-REP nodes, computing the input LLRs of the source node consumes $1$ time-step. Therefore, denoting the latency of decoding the source node by $\Delta_s$, the total latency of a GM-REP node decoder is $1 + \Delta_s$.

In GM-PC nodes, there is no need for computing the input LLRs of the parity nodes that are of type Rate-0. However, if the parity node be a generic node with nonzero rate, depending on the implementation approach for non-binary CN operations, it might take $2\log_2(N_s/N_{s'})$ or $2$ time-steps to compute the input LLRs of the parity node\footnote{$N_s$ and $N_{s'}$ are the sizes of EG-PC node and its parity node, respectively.}. While the former corresponds to the low-complexity implementation based on basic CN units with two inputs, the latter is the required time steps when the CN operations on the left side of binary tree are merged and their output is calculated in one shot. In the following, we assume the CNs are merged and the input of the parity node is computed in $2$ time-steps. Note that the latency of decoding the parity node depends on its type. Also, $3$ time-steps are required for decoding $N_{s'}$ parallel M-SPC codes. As such, denoting the latency of the parity node by $\Delta_p$, the total latency of a GM-PC node is $3$ time-steps when the parity-node is Rate-0 and $5+\Delta_p$ time-steps when it is a generic node with nonzero rate. Type-IV node is an instance of GM-PC node in which the parity node is an M-REP node with $N_{s'}=4$ and $\Delta_p=2$, hence, the total latency of Type-IV node is $7$ time-steps.

\section{Simulation Results}
In this section, we compare the bit-error-rate (BER) and frame-error-rate (FER) of the proposed fast-LNBSC decoder with the LNBSC algorithm (without special nodes) as the baseline. To derive the results, systematic non-binary polar codes of length $N\in\{256,512,1024, 2048\}$ and rate $R\in\{0.25, 0.5,0.75\}$ are modulated using BPSK and transmitted over AWGN channel. The SNR in the presented results is considered as $1/\sigma^2$. Furthermore, the polar codes are constructed in $\mathbb{GF}(16)$ using MC method, \cite{nbplowlatency,savin}, where $f(x)=x^4+x+1$ is adopted as the primitive polynomial. The design SNR for different codes is set to values around which the FER is $10^{-3}$. We use quantized LNBSC decoder with EMS CN approximation, \eqref{NBcn}, in which the channel LLRs are represented by $5$ bits and the internal messages of the decoder are represented by $6$ bits. 

Fig. \ref{fig:fer_alg1} compares the FER performances of the fast-LNBSC decoder with the LNBSC decoder. We consider four rate-0.5 codes whose lengths are $256, 512$, $1024$, and $2048$, respectively. The kernel coefficients for all codes are optimized as\footnote{Here we consider the same approach as in \cite{savin,emspolar,2x2NBkernel,nbplowlatency,scma}, where only $\gamma$ is optimized while the other two coefficients are set to $1$.} $\mu=1$, $\gamma=\alpha^4$ and $\delta = 1$. In these results, only the gray lines in Algorithm $\ref{alg:nbSPC2}$ are used to decode the  M-SPC nodes. It is observed that fast-LNBSC decoder demonstrates between $0.1$dB to $0.3$dB loss in FER which is mainly because of the simplification in Algorithm \ref{alg:nbSPC2}. 
\begin{figure}
\centering
\includegraphics[width=3.8in]{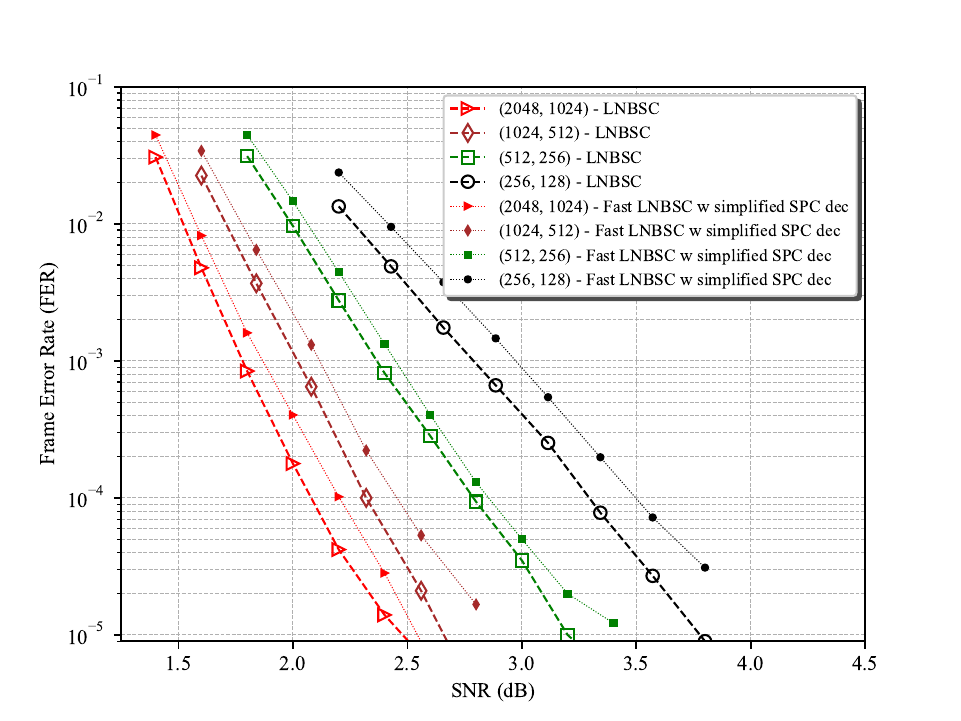}
\caption{FER performances of LNBSC and fast-LNBSC decoders. The simplified version of Algorithm \ref{alg:nbSPC2} is used to decode M-SPC nodes. Code lengths of $\{256, 512,1024, 2048\}$ with code-rate $0.5$ are considered.}
\label{fig:fer_alg1}
\end{figure} 

For the rest of the simulation results, we use fast-LNBSC decoder along with the complete version of Algorithm \ref{alg:nbSPC2} for decoding M-SPC nodes.

In figures \ref{fig:fer_3rate} and \ref{fig:ber_3rate}, the FER and systematic BER of polar codes with different lengths and rates are illustrated. $\gamma=\alpha^4$, $\mu=1$, and $\delta=1$ are considered as the kernel coefficients of all the simulated codes.  Note that the performance of our proposed fast-LNBSC decoder is identical to that of the LNBSC decoder with significantly lower latency. These results suggest that while the proposed method in Algorithm \ref{alg:nbSPC2} is not an ML decoder, it can achieve the performance of ML decoding with lower complexity. Table \ref{tab:nodDist} reports the required time steps and distribution of various special nodes for different code lengths and rates. In Table \ref{tab:nodDist}, GM-REP$^\dag$ denotes a subset of GM-REP nodes that does not include Type-I, Type-II, and Type-V nodes. Also, GM-PC$^\dag$ is a subset of GM-PC nodes in which the Type-III and Type-IV nodes are excluded. Note that the number of GM-REP$^\dag$ in all the cases is zero meaning that Type-I, Type-II, and Type-V nodes form all the corresponding GM-REP nodes. Moreover, the majority of GM-PC nodes are Type-III and Type-IV nodes, and, only, by increasing the code lengths, a few GM-PC$^\dag$ nodes are emerged. The required time steps for decoding each code using our proposed fast LNBSC decoder is illustrated in the last column of Table \ref{tab:nodDist}. Compared to the latency of LNBSC decoding, i.e., $4N-4$, the latency of our proposed fast decoder for codes with lengths $256, 512, 1024$ and $2048$, on average, are reduced by $91\%, 92\%, 94\%$ and $95\%$, respectively.
\begin{figure}
\centering
\includegraphics[width=3.8in]{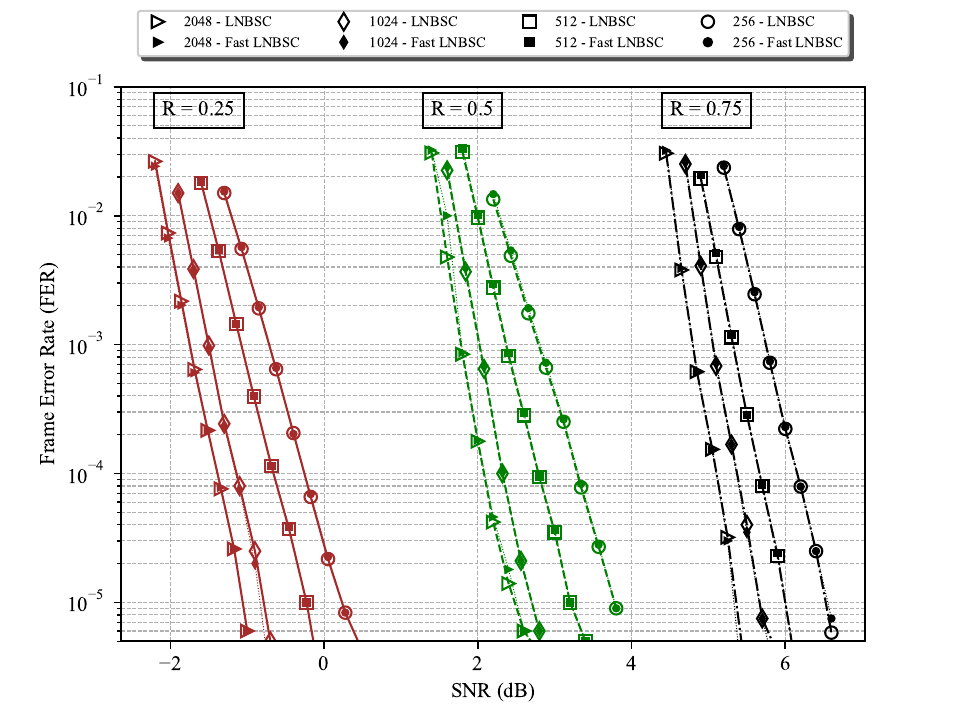}
\caption{FER performances of LNBSC and fast-LNBSC decoders. Algorithm \ref{alg:nbSPC2} is used to decode M-SPC nodes. Code-lengths of $\{256, 512,1024, 2048\}$ with code-rates $\{0.25,0.5,0.75\}$ are considered.}
\label{fig:fer_3rate}
\end{figure}
\begin{figure}
\centering
\includegraphics[width=3.8in]{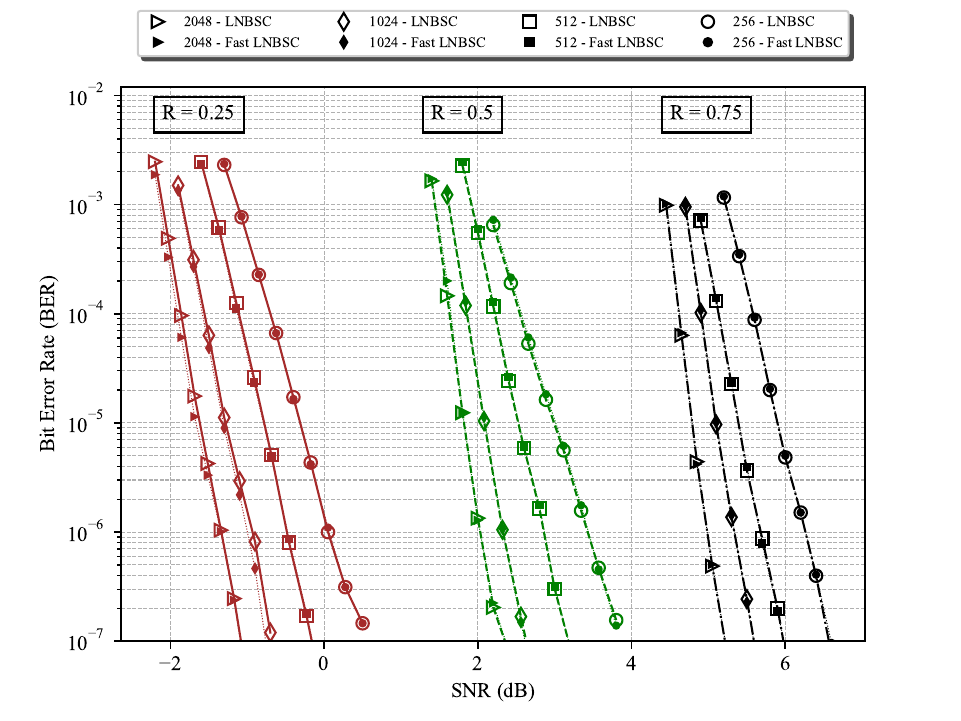}
\caption{Systematic BER performances of LNBSC and fast-LNBSC decoders. Algorithm \ref{alg:nbSPC2} is used to decode M-SPC nodes. Code-lengths of $\{256, 512,1024, 2048\}$ with code-rates $\{0.25,0.5,0.75\}$ are considered.}
\label{fig:ber_3rate}
\end{figure}

Figure \ref{fig:fer_varcoeff} compares the FER performances of the codes designed based on fixed kernel coefficients, $\gamma=\alpha^4$, $\mu=1$ and $\delta=1$, with the ones that are designed according to variable kernel coefficients. We use LNBSC to decode the codes with fixed kernel and fast-LNBSC for decoding the variable-kernel codes. For these results, three length-$1024$ codes with rates $\in\{0.25, 0.5, 0.75\}$ are considered. In the variable kernel cases, we use our proposed simplified structure in Section \ref{sec:Gnbpc} in which the threshold value is set to $s_0=8$ and the maximum size of special nodes, excluding Rate-0 and Rate-1 nodes, are limited to $2^7=128$ symbols. We applied the method of \cite{savin} to optimize the kernel coefficients such that the speed of polarization is maximized. Note that, likewise \cite{savin}, $\mu^{(\nu,s)}$ and $\delta^{(\nu,s)}$ for all the stages are set to $1$. For the codes with rates $0.25$, $0.5$ and $0.75$, the coefficients $\{\gamma^{(0,10)}, \gamma^{(0,9)}, \gamma^{(1,9)}, \gamma^{(0,8)}, \gamma^{(1,8)}, \gamma^{(2,8)}, \gamma^{(3,8)}\}$  were optimized as $\{\alpha^{10},\alpha^{11},\alpha^{9},1,\alpha^{7},\alpha^{11},\alpha^{11}\}$, $\{\alpha^{5},\alpha^{4},\alpha^{8},\alpha^{8},\alpha^{7},\alpha^{4},\alpha^{11}\}$ and  $\{\alpha^{10},\alpha^{4},\alpha^{8},\alpha^{8},\alpha^{8},\alpha^{11},\alpha^{4}\}$. It can be observed in Fig. \ref{fig:fer_varcoeff} that the FER performances of fixed-kernel and variable-kernel codes are pretty close. This is while by setting $\gamma^{(\nu,s) }= 1$ for $1\leq s < 8$, the fast decoding of the proposed variable-kernel code has been significantly simplified. For the codes considered in Fig. \ref{fig:fer_varcoeff}, Table \ref{tab:nodDist_vargamma} reports the latency and distribution of special nodes. Despite limiting the size of special nodes to less than $256$ symbols in variable-kernel codes, the required decoding time steps in both cases, i.e., variable and fixed-kernel codes, are relatively close. 
\begin{figure}
\centering
\includegraphics[width=3.8in]{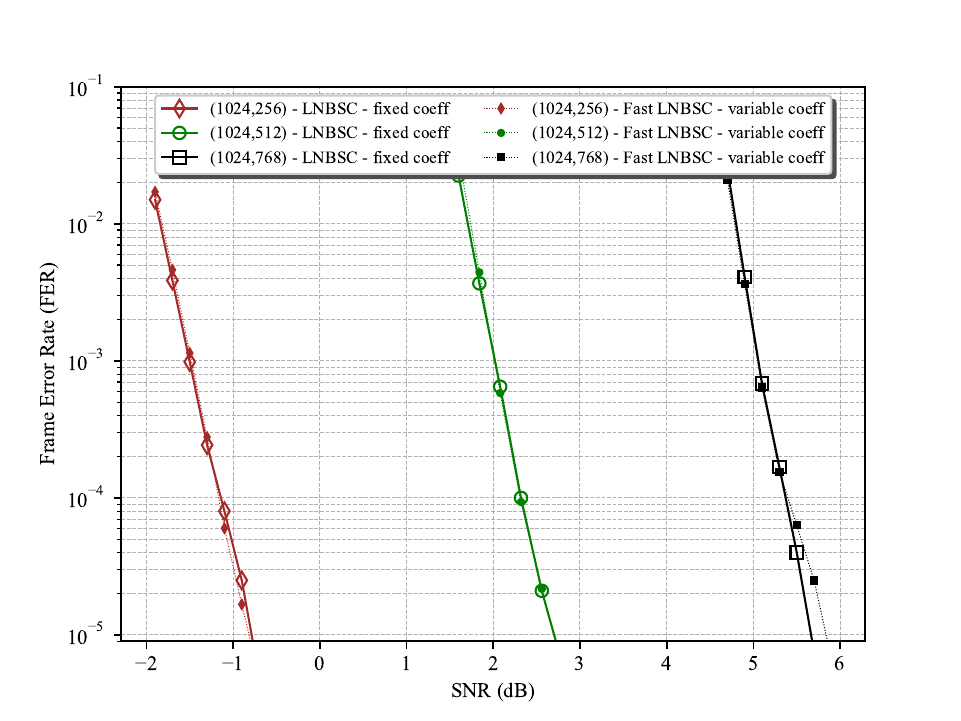}
\caption{FER performances of LNBSC and fast-LNBSC decoders. Algorithm \ref{alg:nbSPC2} is used to decode M-SPC nodes. Three codes with rates $\in\{ 0.25,0.5,0.75 \}$ and equal code lengths of $1024$ are considered. Fixed and variable kernel coefficients are used for code design.}
\label{fig:fer_varcoeff}
\end{figure}

\begin{table}[ht]\small
\caption{Distribution of Special Nodes for NBPCs Designed in $\mathbb{GF}(16)$}
\begin{center}
\scalebox{0.75}{
\begin{tabular}{|c|c||c|c|c|c|c|c|c|c|c|c|c||c|}
	\hline
	\begin{turn}{-90}{\textbf {Code-Rate}}\end{turn}& \begin{turn}{-90}{\textbf {Code-Length}$\:\:$}\end{turn}& \begin{turn}{-90}{\textbf {Rate-0}}\end{turn}& \begin{turn}{-90}{\textbf {Rate-1}}\end{turn}& \begin{turn}{-90}{\textbf {M-REP}}\end{turn}& \begin{turn}{-90}{\textbf {M-SPC}}\end{turn}& \begin{turn}{-90}{\textbf {Type-I}}\end{turn}& \begin{turn}{-90}{\textbf {Type-II}}\end{turn}& \begin{turn}{-90}{\textbf {Type-III}}\end{turn}& \begin{turn}{-90}{\textbf {Type-IV}}\end{turn}& \begin{turn}{-90}{\textbf {Type-V}}\end{turn}& \begin{turn}{-90}{\textbf {GM-REP}$^\dag$}\end{turn}& \begin{turn}{-90}{\textbf {GM-PC}$^\dag$}\end{turn}& \begin{turn}{-90}{\textbf {Time-Steps}}\end{turn}\\
	\hline
	\hline
	\multirow{4}{*}{$\mathbf{0.25}$}& $\mathbf{2048}$& $9$&$1$  & $11$& $13$& $1$& $1$& $2$& $1$& $12$& -& $1$& $350$\\
	\cline{2-14}
	& $\mathbf{1024}$& $4$&-  & $6$& $11$& -& $3$& $2$& -& $6$& -& -& $215$\\
	\cline{2-14}
	& $\mathbf{512}$& $4$& $1$& $6$& $6$& -& 1& -& $1$& $4$& -& -& $145$ \\
	\cline{2-14}
	& $\mathbf{256}$& -& -& $2$& $2$& $1$& -& $1$& -& $4$& -& -& $76$ \\
	\hline
	\hline
		\multirow{4}{*}{$\mathbf{0.5}$}& $\mathbf{2048}$& $6$& -& $12$& $18$& $4$& $2$& $3$& $2$& $10$& -& $3$& $438$\\
	\cline{2-14}
		& $\mathbf{1024}$& $5$&-  & $6$& $14$& $2$& $2$& $1$& -& $7$& -& $1$& $262$\\
	\cline{2-14}
	& $\mathbf{512}$& $2$& $2$& $4$& $10$& $1$& $1$& -& $1$& $5$& -& -& $178$ \\
	\cline{2-14}
	& $\mathbf{256}$& -& $1$& $2$& $3$& -& $1$& $1$& -& $4$& -& -& $88$ \\
	\hline
	\hline
		\multirow{4}{*}{$\mathbf{0.75}$}& $\mathbf{2048}$& $3$& $2$& $8$& $23$& $3$& $3$& $1$& $2$& $9$& -& $1$& $398$\\
	\cline{2-14}
		& $\mathbf{1024}$& $2$& $1$& $7$& $13$& $1$& -& $2$& -& $9$& -& $1$& $263$\\
	\cline{2-14}
	& $\mathbf{512}$& -& -& $2$& $10$& -& $2$& -& $2$& $4$& -& -& $156$ \\
	\cline{2-14}
	& $\mathbf{256}$& -& $1$& $2$& $6$& $1$& -& $1$& -& $3$& -& -& $98$ \\
	\hline
\end{tabular}}\\
\tiny %\scriptsize
\textit{GM-REP$^\dag$: a subset of GM-REP nodes that does not include Type-I, Type-II and Type-V nodes.$\:\:\:\:\:\:\:\:\:\:\:\:\:$}\\
\textit{GM-PC$^\dag$: a subset of GM-PC nodes that does not include Type-III and Type-IV nodes.$\:\:\:\:\:\:\:\:\:\:\:\:\:\:\: \:\:\:\:\:\:\:\:\:$}
\end{center}
\label{tab:nodDist}
\end{table}

\begin{table}[ht]\small
\caption{Distribution of Special Nodes for Length-$1024$ NBPCs Designed in $\mathbb{GF}(16)$}
\begin{center}
\scalebox{0.75}{
\begin{tabular}{|c|c||c|c|c|c|c|c|c|c|c|c|c||c|}
	\hline
	\begin{turn}{-90}{\textbf {Kernel}}\end{turn}
	& \begin{turn}{-90}{\textbf {Code-Rate}$\:\:$}\end{turn}& \begin{turn}{-90}{\textbf {Rate-0}}\end{turn}& \begin{turn}{-90}{\textbf {Rate-1}}\end{turn}& \begin{turn}{-90}{\textbf {M-REP}}\end{turn}& \begin{turn}{-90}{\textbf {M-SPC}}\end{turn}& \begin{turn}{-90}{\textbf {Type-I}}\end{turn}& \begin{turn}{-90}{\textbf {Type-II}}\end{turn}& \begin{turn}{-90}{\textbf {Type-III}}\end{turn}& \begin{turn}{-90}{\textbf {Type-IV}}\end{turn}& \begin{turn}{-90}{\textbf {Type-V}}\end{turn}& \begin{turn}{-90}{\textbf {GM-REP}$^\dag$}\end{turn}& \begin{turn}{-90}{\textbf {GM-PC}$^\dag$}\end{turn}& \begin{turn}{-90}{\textbf {Time-Steps$\:\:\:$}}\end{turn}\\
	\hline
	\hline
	\multirow{3}{*}{\textbf{Variable}}& $\mathbf{0.25}$& $4$&-  & $6$& $10$& -& $3$& $1$& $1$& $5$& -& $1$& $217$\\
	\cline{2-14}
	& $\mathbf{0.5}$& $6$& $1$& $7$& $14$& $1$& $3$& $2$& -& $8$& -& -& $277$ \\
	\cline{2-14}
	& $\mathbf{0.75}$& $2$& $1$& $8$& $13$& -& $1$& $1$& $1$& $8$& -& $1$& $261$ \\
	\hline
	\hline
		\multirow{3}{*}{\textbf{Fixed}}& $\mathbf{0.25}$& $4$&-  & $6$& $11$& -& $3$& $2$& -& $6$& -& -& $215$\\
	\cline{2-14}
	& $\mathbf{0.5}$& $5$&-  & $6$& $14$& $2$& $2$& $1$& -& $7$& -& $1$& $262$\\
	\cline{2-14}
	& $\mathbf{0.75}$& $2$& $1$& $7$& $13$& $1$& -& $2$& -& $9$& -& $1$& $263$ \\
	\hline

	\hline
\end{tabular}}\\
\tiny %\scriptsize
\textit{GM-REP$^\dag$: a subset of GM-REP nodes that does not include Type-I, Type-II and Type-V nodes.$\:\:\:\:\:\:\:\:\:\: \:\:\:\:\:\:\:\:\: $}\\
\textit{GM-PC$^\dag$: a subset of GM-PC nodes that does not include Type-III and Type-IV nodes.$\:\:\:\:\:\:\:\:\:\:\:\:\:\:\: \:\:\:\:\:\:\:\:\: \:\:\:\:\:\: $}
\end{center}
\label{tab:nodDist_vargamma}
\end{table}

\section{Conclusion}
In this work, a fast successive-cancellation (SC) decoding algorithm for non-binary polar codes (NBPCs), which we refer to as fast LLR-domain non-binary SC (LNBSC) decoder, is proposed. The main underlying idea behind our proposed method is the identification and fast decoding of certain non-binary special nodes in the decoding tree of the NBPC. To the best of our knowledge, this is the first work in which the identification of a variety of non-binary special nodes along with their decoding algorithms is proposed. We also proposed a simplified NBPC structure that facilitates the procedure of our proposed non-binary fast SC decoding. Our results show that the proposed fast non-binary decoder achieves significant decoding speed improvement with respect to the original SC decoder. This is while a significant reduction in the decoding latency is observed without sacrificing the error-rate performance of the code.

%\section*{Acknowledgments}
%This should be a simple paragraph before the References to thank those individuals and institutions who have supported your work on this article.

%{\appendix[Proof of the Zonklar Equations]
%Use $\backslash${\tt{appendix}} if you have a single appendix:
%Do not use $\backslash${\tt{section}} anymore after $\backslash${\tt{appendix}}, only $\backslash${\tt{section*}}.
%If you have multiple appendixes use $\backslash${\tt{appendices}} then use $\backslash${\tt{section}} to start each appendix.
%You must declare a $\backslash${\tt{section}} before using any $\backslash${\tt{subsection}} or using $\backslash${\tt{label}} ($\backslash${\tt{appendices}} by itself
% starts a section numbered zero.)}

%{\appendices
%\section*{Proof of the First Zonklar Equation}
%Appendix one text goes here.
% You can choose not to have a title for an appendix if you want by leaving the argument blank
%\section*{Proof of the Second Zonklar Equation}
%Appendix two text goes here.}

\newpage

%\section{Biography Section}
%If you have an EPS/PDF photo (graphicx package needed), extra braces are
% needed around the contents of the optional argument to biography to prevent
% the LaTeX parser from getting confused when it sees the complicated
% $\backslash${\tt{includegraphics}} command within an optional argument. (You can create
% your own custom macro containing the $\backslash${\tt{includegraphics}} command to make things
% simpler here.)
% 
%\vspace{11pt}
%
%\bf{If you include a photo:}\vspace{-33pt}
%\begin{IEEEbiography}[{\includegraphics[width=1in,height=1.25in,clip,keepaspectratio]{fig1}}]{Michael Shell}
%Use $\backslash${\tt{begin\{IEEEbiography\}}} and then for the 1st argument use $\backslash${\tt{includegraphics}} to declare and link the author photo.
%Use the author name as the 3rd argument followed by the biography text.
%\end{IEEEbiography}
%
%\vspace{11pt}
%
%\bf{If you will not include a photo:}\vspace{-33pt}
%\begin{IEEEbiographynophoto}{John Doe}
%Use $\backslash${\tt{begin\{IEEEbiographynophoto\}}} and the author name as the argument followed by the biography text.
%\end{IEEEbiographynophoto}
%
%
%
%
%\vfill


% Generated by IEEEtran.bst, version: 1.14 (2015/08/26)
\begin{thebibliography}{}
\providecommand{\url}[1]{#1}
\csname url@samestyle\endcsname
\providecommand{\newblock}{\relax}
\providecommand{\bibinfo}[2]{#2}
\providecommand{\BIBentrySTDinterwordspacing}{\spaceskip=0pt\relax}
\providecommand{\BIBentryALTinterwordstretchfactor}{4}
\providecommand{\BIBentryALTinterwordspacing}{\spaceskip=\fontdimen2\font plus
\BIBentryALTinterwordstretchfactor\fontdimen3\font minus
  \fontdimen4\font\relax}
\providecommand{\BIBforeignlanguage}[2]{{%
\expandafter\ifx\csname l@#1\endcsname\relax
\typeout{** WARNING: IEEEtran.bst: No hyphenation pattern has been}%
\typeout{** loaded for the language `#1'. Using the pattern for}%
\typeout{** the default language instead.}%
\else
\language=\csname l@#1\endcsname
\fi
#2}}
\providecommand{\BIBdecl}{\relax}
\BIBdecl

\end{thebibliography}


\begin{thebibliography}{1}
\bibliographystyle{IEEEtran}

\bibitem{arikan}
E. Arıkan, ``Channel polarization: A method for constructing capacity-achieving codes for symmetric binary-input memoryless channels,'' \emph{IEEE Trans. Inf. Theory,} vol. 55, no. 7, pp. 3051--3073, Jul. 2009.

\bibitem{5G}
\emph{Study on New Radio Access Technology,} document Final Report of 3GPP TSG RAN WG1 \#87 v1.0.0, Reno, NV, USA, Nov. 2016. [Online]. Available: http://www.3gpp.org/ftp/tsg\_ran/WG1\_RL1/TSGR1\_87

\bibitem{yazdi}
A. Alamdar-Yazdi and F. R. Kschischang, ``A simplified successivecancellation decoder for polar codes,'' \emph{IEEE Commun. Lett.,} vol. 15, no. 12, pp. 1378--1380, Dec. 2011.

\bibitem{sarkis2014}
G. Sarkis, P. Giard, A. Vardy, C. Thibeault, and W. Gross, ``Fast polar
decoders: Algorithm and implementation,'' \emph{IEEE J. Sel. Areas Commun.,} vol. 32, no. 5, pp. 946--957, May 2014.

%\bibitem{giardconf2018}
%P. Giard and A. Burg, ``Fast-SSC-flip decoding of polar codes,'' in \emph{Proc.
%IEEE Wireless Commun. Netw. Conf. Workshops (WCNCW),} Apr. 2018, pp. 73--77.

\bibitem{hashemilist2016}
S. A. Hashemi, C. Condo, and W. J. Gross, ``A fast polar code list decoder architecture based on sphere decoding,'' \emph{IEEE Trans. Circuits
Syst. I, Reg. Papers,} vol. 63, no. 12, pp. 2368--2380, Dec. 2016.

%\bibitem{hashemisphere}
%S. A. Hashemi, C. Condo, and W. J. Gross, ``A fast polar code list decoder
%architecture based on sphere decoding,'' \emph{IEEE Trans. Circuits Syst. I,
%Regular Papers,} vol. 63, no. 12, pp. 2368--2380, Dec. 2016.

\bibitem{hashemilist2017}
S. A. Hashemi, C. Condo, and W. J. Gross, ``Fast and flexible successive-cancellation list decoders for polar codes,'' \emph{IEEE Trans. Signal Process.,} vol. 65, no. 21, pp. 5756--5769, Nov. 2017.

\bibitem{hanif}
M. Hanif and M. Ardakani, ``Fast successive-cancellation decoding of polar codes: Identification and decoding of new nodes,'' \emph{IEEE Commun. Lett.,} vol. 21, no. 11, pp. 2360--2363, Nov. 2017.

\bibitem{sarkis2018}
P. Giard, A. Balatsoukas-Stimming, G. Sarkis, C. Thibeault, and W. J. Gross, ``Fast low-complexity decoders for low-rate polar codes,'' \emph{J. Signal Process. Syst.,} vol. 90, no. 5, pp. 675--685, May 2018.

\bibitem{condo}
C. Condo, V. Bioglio, and I. Land, ``Generalized fast decoding of
polar codes,'' in \emph{Proc. IEEE Global Commun. Conf. (GLOBECOM),}
Dec. 2018, pp. 1--6.

\bibitem{LiLetter}
S. Li, Y. Deng, L. Lu, J. Liu, and T. Huang, ``A low-latency simplified successive cancellation decoder for polar codes based on node error probability,'' \emph{IEEE Commun. Lett.,} vol. 22, no. 12, pp. 2439--2442, Dec. 2018.

\bibitem{ardakani}
M. H. Ardakani, M. Hanif, M. Ardakani, and C. Tellambura, ``Fast
successive-cancellation-based decoders of polar codes,'' \emph{IEEE Trans.
Commun.,} vol. 67, no. 7, pp. 4562--4574, Jul. 2019.

\bibitem{ercan}
F. Ercan, T. Tonnellier, C. Condo, and W. J. Gross, ``Operation merging
for hardware implementations of fast polar decoders,'' \emph{J. Signal Process.
Syst.,} vol. 91, no. 9, pp. 995--1007, Sep. 2019.

\bibitem{hashemiflex}
S. A. Hashemi, C. Condo, M. Mondelli, and W. J. Gross, ``Rate-flexible fast polar decoders,'' \emph{IEEE Trans. Signal Process.,} vol. 67, no. 22,
pp. 5689--5701, Nov. 2019.

\bibitem{SR}
H. Zheng et al., ``Threshold-based fast successive-cancellation decoding of polar codes,"  \emph{IEEE Trans.
Commun.,} vol. 69, no. 6, pp. 3541--3555, June 2021.


\bibitem{SRlist}
Y. Ren, A. T. Kristensen, Y. Shen, A. Balatsoukas-Stimming, C. Zhang and A. Burg, ``A sequence repetition node-based successive cancellation list decoder for 5G polar codes: algorithm and implementation,''  \emph{IEEE Trans. Signal Process.,} vol. 70, pp. 5592--5607, 2022.

\bibitem{talvardy}
I. Tal and A. Vardy, ``List decoding of polar codes,'' \emph{ IEEE Trans. Inf.
Theory,} vol. 61, no. 5, pp. 2213--2226, May 2015.

\bibitem{sasoglu}
E. Şaşoğlu, E. Telatar and E. Arıkan, ``Polarization for arbitrary discrete memoryless channels,'' in \emph{Proc. IEEE Inf. Theory Workshop}, Taormina, Italy, 2009, pp. 144--148

\bibitem{mori_confRS}
R. Mori and T. Tanaka, ``Non-binary polar codes using Reed-Solomon codes and algebraic geometry codes,'' in \emph{Proc. IEEE Inf. Theory Workshop}, Dublin, Ireland, 2010, pp. 1--5.

\bibitem{park_qary}
W. Park and A. Barg, ``Polar codes for q-ary channels, $q = 2^r$,'' \emph{IEEE
Trans. Inf. Theory,} vol. 59, no. 2, pp. 955--969, Feb. 2013.

\bibitem{mori_RSpolar}
R. Mori and T. Tanaka, ``Source and channel polarization over finite fields and Reed–Solomon matrices,'' \emph{IEEE Trans. Inf. Theory}, vol. 60, no. 5, pp. 2720--2736, May 2014.

\bibitem{listdecRS}
N. Cheng, R. Zhang, Y. Ge, W. Shi, Q. Zhang, and X. S. Shen, ``Encoder
and list decoder of Reed-Solomon kernel based polar codes,'' in \emph{Proc. 8th
Int. Conf. Wireless Commun. Signal Process.,} 2016, pp. 1--6.

\bibitem{2x2NBkernel}
P. Yuan and F. Steiner, ``Construction and decoding algorithms for polar
codes based on 2 $\times$ 2 non-binary kernels,'' in \emph{Proc. IEEE 10th Int. Symp. Turbo Codes Iterative Inf. Process.,} 2018, pp. 1--5.

\bibitem{gf3gf5}
M. Falk, G. Bauch, and I. Nissen, ``Analysis of non-binary polar codes over GF(3) and GF(5) with phase shift keying for short messages,'' in \emph{Proc. IEEE 92nd Veh. Technol. Conf.,} 2020, pp. 1--5.

\bibitem{trifinovLett}
L. Karakchieva and P. Trifonov, ``An approximate method for construction of polar codes with kernels over F$_{2^t}$,'' \emph{IEEE Commun. Lett.,} vol. 24, no. 9, pp. 1857--1860, Sept. 2020.

\bibitem{savin}
V. Savin, ``Non-binary polar codes for spread-spectrum modulations,'' in
\emph{Proc. 11th Int. Symp. Topics Coding (ISTC)}, 2021, pp. 1--5

\bibitem{emspolar}
F. Cochachin, L. Luzzi and F. Ghaffari, ``Reduced complexity of a successive cancellation based decoder for NB-polar codes," \emph{Proc. 11th Int. Symp. Topics Coding (ISTC)}, 2021, pp. 1--5



\bibitem{rate_1NBP}
B. Feng, R. Liu, and H. Sun, ``Simplified successive-cancellation list decoding of non-binary polar codes with rate-1 node,'' in \emph{Proc. IEEE Wireless Commun. Netw. Conf.,} 2020, pp. 1--6.

\bibitem{scma}
S. Li, M. Cai, L. Jin, Y. Sun, H. Wu and P. Wang, ``An ultra-reliable low-latency non-binary polar coded SCMA scheme,'' \emph{IEEE Trans. Veh. Technol.}, vol. 71, no. 6, pp. 6518--6533, June 2022.

\bibitem{nbplowlatency}
P. Chen, B. Bai, and X. Ma, ``Non-binary polar coding with low decoding latency and complexity,” \emph{J. Intell. Inf. Syst.}, vol. 1, pp. 36--53,
2023.

\bibitem{ems_LDPC}
A. Voicila, D. Declercq, F. Verdier, M. Fossorier, and P. Urard, ``Low-complexity decoding for non-binary LDPC codes in high
order fields,'' \emph{IEEE Trans. Commun.,} vol. 58, no. 5, pp. 1365–--1375,
May 2010.

%\bibitem{arikansyst}
%E. Arıkan, ``Systematic polar coding,'' \emph{IEEE Commun. Lett.,} vol. 15, no. 8,
%pp. 860--862, Aug. 2011
%
%\bibitem{sarkis_syst}
%G. Sarkis, I. Tal, P. Giard, A. Vardy, C. Thibeault and W. J. Gross, ``Flexible and low-complexity encoding and decoding of systematic polar codes," \emph{IEEE Trans. Commun.,} vol. 64, no. 7, pp. 2732--2745, July 2016.








\end{thebibliography}
\end{document}